\newcommand*{\eg}{e.g.\@\xspace}
\newcommand*{\ie}{i.e.\@\xspace}
\newcommand{\set}[1]{\{#1\}}
\newcommand{\integers}[0]{\mathbb{Z}}
\newcommand{\Z}{\integers}
\newcommand{\naturals}[0]{\mathbb{N}}
\newcommand{\N}{\naturals}
\newcommand{\class}[1]{\textsf{#1}}
\newcommand{\poly}[1]{\textsf{poly}(#1)}
\renewcommand{\subset}{\subseteq}
\newcommand{\Oh}{\mathcal{O}}
\newcommand{\abs}[1]{\lvert #1 \rvert}
\newcommand{\bit}[1]{\mathrm{bitsize}(#1)}
\newcommand{\lex}{\prec_{lex}}
\newcommand{\len}[1]{\mathrm{len}(#1)}
\newcommand{\eff}[1]{\mathrm{eff}(#1)}
\newcommand{\bineff}[1]{\mathrm{eff}_b(#1)}
\newcommand{\unaeff}[1]{\mathrm{eff}_u(#1)}
\newcommand{\binguard}[1]{\mathrm{grd}_b(#1)}
\newcommand{\unaguard}[1]{\mathrm{grd}_u(#1)}
\renewcommand{\vec}[1]{{\bf #1}}
\newcommand{\oneguard}[1]{\mathrm{grd}_1(#1)}
\newcommand{\twoguard}[1]{\mathrm{grd}_2(#1)}
\newcommand{\configuration}[3]{#1(\vec{#2}#3)}
\newcommand{\config}[2]{\configuration{#1}{#2}{}} 
\newcommand{\covfig}[2]{\configuration{#1}{#2}{'}} 
\newcommand{\configvanilla}[2]{#1(#2)}
\newcommand{\run}[3]{#1\xrightarrow{#2}#3}
\newcommand{\PreserveBackslash}[1]{\let\temp=\\#1\let\\=\temp}
\newcolumntype{C}[1]{>{\PreserveBackslash\centering}p{#1}}
\newcolumntype{R}[1]{>{\PreserveBackslash\raggedleft}p{#1}}
\newcolumntype{L}[1]{>{\PreserveBackslash\raggedright}p{#1}}
\newcommand{\problemx}[3]{
\par\noindent\underline{\sc#1}\par\nobreak\vskip.2\baselineskip
\begingroup\clubpenalty10000\widowpenalty10000
\setbox0\hbox{\bf INPUT:\ }\setbox1\hbox{\bf QUESTION:\ }
\dimen0=\wd0\ifnum\wd1>\dimen0\dimen0=\wd1\fi
\vskip-\parskip\noindent
\hbox to\dimen0{\box0\hfil}\hangindent\dimen0\hangafter1\ignorespaces#2\par
\vskip-\parskip\noindent
\hbox to\dimen0{\box1\hfil}\hangindent\dimen0\hangafter1\ignorespaces#3\par
\endgroup}
\definecolor{lightpink}{RGB}{253, 164, 229}
\definecolor{lightblue}{RGB}{171, 236, 255}
\newtheorem{theorem}{Theorem}[section]
\newtheorem{lemma}[theorem]{Lemma}
\newcommand{\newtheoremwithcrefformat}[2]{%
  \newtheorem{#1}{#2}[section]%
  \crefformat{#1}{##2\MakeUppercase#1~##1##3}%
  \Crefformat{#1}{##2\MakeUppercase#1~##1##3}%
}
\begin{document}

\title{
    Coverability in 2-VASS with One Unary Counter is in NP
}

\author{
    Filip Mazowiecki
        \footnote{University of Warsaw, Warsaw, Poland, 
        \textsf{f.mazowiecki@mimuw.edu.pl}. 
        Supported by the ERC grant INFSYS, agreement no. 950398.} \and
    Henry Sinclair-Banks 
        \footnote{Centre for Discrete Mathematics and its Applications \& Department of Computer Science, University of Warwick, Coventry, UK,
        \textsf{h.sinclair-banks@warwick.ac.uk}. 
        Supported by EPSRC Standard Research Studentship (DTP), grant EP/T5179X/1.} \and
    Karol W\k{e}grzycki
        \footnote{Saarland University and Max Planck Institute for Informatics, Saarbr\"ucken, Germany,
        \textsf{wegrzycki@cs.uni-saarland.de}. 
        Supported by the ERC grant TIPEA, agreement no. 850979.}
}

\date{}
\maketitle
\begin{abstract}
Coverability in Petri nets finds applications in verification of safety properties of reactive systems.
We study coverability in the equivalent model: Vector Addition Systems with States (VASS).

A $k$-VASS can be seen as $k$ counters and a finite automaton whose transitions are labelled with $k$ integers.
Counter values are updated by adding the respective transition labels.
A configuration in this system consists of a state and $k$ counter values.
Importantly, the counters are never allowed to take negative values.
The coverability problem asks whether one can traverse the $k$-VASS from the initial configuration to a configuration with at least the counter values of the target.

In a well-established  line of work on $k$-VASS, coverability in 2-VASS is already \class{PSPACE}-hard when the integer updates are encoded in binary.
This lower bound limits the practicality of applications, so it is natural to focus on restrictions.
In this paper we initiate the study of 2-VASS with one unary counter.
Here, one counter receives binary encoded updates and the other receives unary encoded updates.
Our main result is that coverability in 2-VASS with one unary counter is in \class{NP}.
This improves upon the inherited state-of-the-art \class{PSPACE} upper bound.
Our main technical contribution is that one only needs to consider runs in a certain compressed linear form.

\end{abstract}
\begin{picture}(0,0)
\put(-50,-340)
{\hbox{\includegraphics[width=40px]{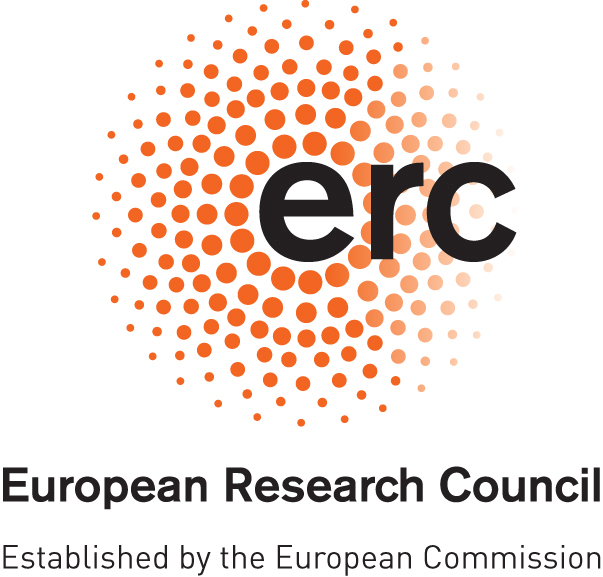}}}
\put(-60,-400)
{\hbox{\includegraphics[width=60px]{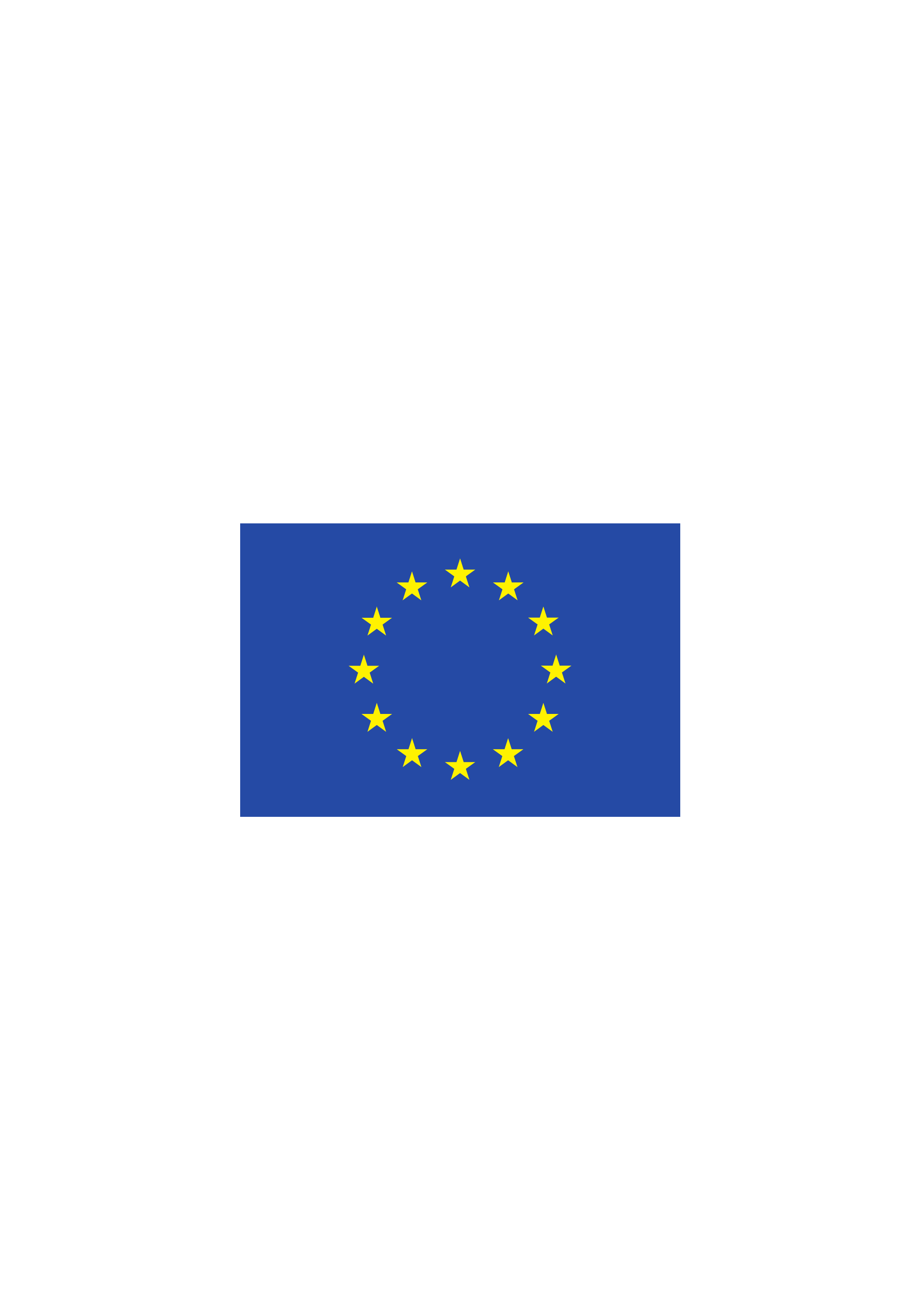}}}
\end{picture}

\section{Introduction} \label{sec:introduction}

Vector Addition Systems with States (VASS) are a well-studied class of infinite-state systems (see the survey~\cite{Schmitz16}).
These are finite automata with counters that can be updated, but are never allowed to take negative values.
Thus, a configuration consists of a state and a vector over the natural numbers.
The central decision problems are the reachability and coverability problems.
The reachability problem asks whether from a given start configuration one can reach the target configuration.
The coverability problem is the same except that the target configuration need not be reached exactly, counter values are allowed to be greater. 
Both problems are not only mathematically elegant, but they have interesting theoretical applications~\cite{BojanczykDMSS11} and implementations~\cite{BlondinHO21}.
Coverability is provably a simpler problem that is better suited for applications; reachability tools are mostly applied to coverability benchmarks~\cite{DixonL20}. 
Yet coverability has applications in the verification of safety conditions in reactive systems~\cite{EsparzaLMMN14,GantyM12}.
Such systems may require additional data structures to be accurately represented, like counters for example.
Safety conditions often boil down to whether a particular state can be reached as opposed to a particular configuration~\cite{ComonJ98}.

Coverability and reachability have been studied for decades. 
The equivalent model of Petri nets was introduced already in the sixties~\cite{Petri62}. 
For general VASS, Lipton proved in 1976 an \class{EXPSPACE} lower bound that applies to both coverability and reachability~\cite{Lipton76}. 
Two years later, Rackoff proved a matching \class{EXPSPACE} upper bound for coverability~\cite{Rackoff78}. 
Later in 1981, Mayr proved that reachability is decidable~\cite{Mayr84} without providing an upper bound for the algorithm. 
The construction was simplified by Kosaraju~\cite{Kosaraju82} and Lambert~\cite{Lambert92}, and a recent series of papers by Leroux and Schmitz ended in 2019 by proving an Ackermann upper bound~\cite{LerouxS19}. 
A matching Ackermann lower bound was published in 2021 by two independent groups~\cite{CzerwinskiO21,Leroux21}.

Plenty of attention has been given to VASS with fixed dimension, that is when the number of counters $k$ is invariable, denoted $k$-VASS.
For fixed dimension VASS it matters much whether the counter updates are encoded in unary or binary.
Already, Rackoff gives \class{NL} and \class{PSPACE} upper bounds for coverability in unary encoded and binary encoded $k$-VASS, respectively~\cite{Rackoff78}. 
The coverability problem where there are no counters is just directed graph reachability that is \class{NL}-complete~\cite{computational-complexity}.
Thus, coverability in unary encoded $k$-VASS is \class{NL}-complete, for every fixed $k$.
Coverability in binary encoded 1-VASS is in \class{NC}${}^2$~\cite{AlmagorCPSW20}, it can therefore be decided in deterministic polynomial time.
If there are two or more binary counters, coverability is \class{PSPACE}-hard~\cite{BlondinFGHM15} via a reduction from reachability in bounded one-counter automata that is \class{PSPACE}-complete~\cite{FearnleyJ13}.
Therefore, coverability in binary encoded $k$-VASS is \class{PSPACE}-complete for every $k \geq 2$.
See Figure~\ref{fig:coverability-bounds} for the complexities of coverability in VASS with a fixed number of unary and binary encoded counters.
This is all in striking contrast to the reachability problem in fixed dimension VASS, since reachability in 8-VASS is already known to be nonelementary~\cite{CzerwinskiO22}.

\begin{figure}[t]
    \begin{center}
    \begin{tabular}{c c c|C{37mm}|C{37mm}|C{37mm}|}
        & & \multicolumn{1}{c}{\hspace{1mm}} & \multicolumn{3}{c}{Number of unary counters}    \\
        & \multicolumn{2}{c}{\textcolor{white}{P = NP}}         & \multicolumn{1}{c}{$0$}   & \multicolumn{1}{c}{$1$}   & \multicolumn{1}{c}{$\geq 2$}  \\ \cline{4-6}
        \parbox[t]{2mm}{\multirow{5}{*}{\rotatebox[origin=c]{90}{\hspace{-6mm}Number of}}}   & 
        \parbox[t]{2mm}{\multirow{5}{*}{\rotatebox[origin=c]{90}{\hspace{-5mm}binary counters}}} 
        & \multirow{2}{*}{$0$}                                  & \multirow{2}{*}{\class{NL}-complete~\cite{computational-complexity}} & \multirow{2}{*}{\class{NL}-complete~\cite{ValiantP75}}& \multirow{2}{*}{\class{NL}-complete~\cite{Rackoff78}} \\ 
        & & & & & \\ \cline{4-6}    
        & & \multirow{2}{*}{$1$}                                &
        \multirow{2}{*}{in \class{NC}${}^2
        \subseteq\,$\class{P}~\cite{AlmagorCPSW20}} & \multirow{2}{*}{\textbf{in
        \class{NP}} [\emph{this paper}]} & \multirow{2}{*}{} \\
        & & & & & \\ \cline{4-6}    
        & & \multirow{2}{*}{$\geq2$}                            & \multirow{2}{*}{\class{PSPACE}-complete~\cite{BlondinFGHM15}} & \multirow{2}{*}{\class{PSPACE}-complete} & \multirow{2}{*}{\class{PSPACE}-complete~\cite{Rackoff78}} \\
        & & & & & \\ \cline{4-6}
    \end{tabular}
    \end{center}
    \vspace{-2mm}
    \caption{
    The complexities of coverability in VASS with a fixed number of unary and binary encoded counters.
    All \class{NL} lower bounds arise from the zero counters case, here coverability is directed graph reachability and that is well known to be \class{NL}-complete~\cite{computational-complexity}.
    In the case of one binary counter at least two unary counters, we are not aware of any non-trivial upper bounds below \class{PSPACE}.
    When there are at least two binary counters and any number of unary counters, coverability is \class{PSPACE}-complete. 
    The lower bound holds for 2-VASS with two binary counters~\cite{BlondinFGHM15} and the upper bound is given by Rackoff for any fixed dimension~\cite{Rackoff78}.
    Recall that coverability in general VASS, where the number of counters is not fixed, is \class{EXPSPACE}-complete~\cite{Rackoff78}.
    }
    \label{fig:coverability-bounds}
\end{figure}

There is a prominent line of work on 2-VASS with various encodings.
The seminal paper in 1979 of Hopcroft and Pansiot~\cite{HopcroftP79} shows reachability in 2-VASS is decidable, proving that the reachability set is effectively semi-linear. 
Moreover, in the same paper the authors show, by an example, that the 3-VASS reachability set need not be semi-linear. 
Later, this was improved as it was shown that for 2-VASS the reachability relation is effectively semi-linear~\cite{LerouxS04}. 
This proof shows that every 2-VASS can be characterised by a \emph{flat model}, \ie where the underlying finite automaton does not contain nested cycles. 
A more careful analysis of that paper, resulted in a \class{PSPACE} upper bound result for reachability in binary encoded 2-VASS~\cite{BlondinFGHM15}.
Since coverability in binary encoded 2-VASS is \class{PSPACE}-hard~\cite{BlondinFGHM15}, the authors were able to conclude that both coverability and reachability are \class{PSPACE}-complete.
Just as coverability demonstrated the difference encoding makes to complexity, so does reachability; later it was proved that reachability in unary encoded 2-VASS is \class{NL}-complete~\cite{EnglertLT16}.

\paragraph*{Our Results and Techniques.}
We consider the coverability problem for 2-VASS with one unary counter.
Here, updates of one counter are encoded in binary and the updates of the other are encoded in unary, see Figure~\ref{fig:example} for an example.
Notice that the unary counter need not be limited to polynomially bounded values.
Otherwise, the value of the unary counter could be encoded into the states for an instance of coverability in binary encoded 1-VASS.
Furthermore, we do not impose any restrictions on the initial and the target configurations, \ie both coordinates of these vectors are encoded in binary. 
Our main result is that coverability in 2-VASS with one unary counter is in \class{NP}.

Coverability in binary encoded $k$-VASS is \class{PSPACE}-complete, for $k \geq 2$.
The lower bound limits the practicality of applications.
Therefore, it is sensible to consider restricted variations and quantify their complexity.
We remark that coverability in fixed dimension VASS had widely-open complexity if there was exactly one binary counter and at least one unary counter.
See Figure~\ref{fig:coverability-bounds} for a summary of the known results. 

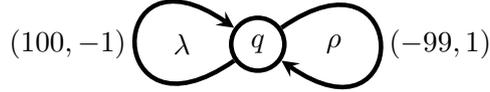
\begin{figure}[t]
    \centering
    \vspace{-6mm}
    \begin{tikzpicture}
        \node[circle, draw, minimum size = 7mm, line width = 0.6mm] (q) at (0, 0) {$q$};

        \path (q) edge[-stealth, loop above, looseness = 12, distance = 60, out = 215, in = 145, line width = 0.6mm] (q);
        \node (l) at (-1, 0) {$\lambda$};
        \node (l) at (-2.5, 0) {$(100, -1)$};

        \path (q) edge[-stealth, loop above, looseness = 12, distance = 60, out = 35, in = 325, line width = 0.6mm] (q);
        \node (l) at (1, 0) {$\rho$};
        \node (l) at (2.4, 0) {$(-99, 1)$};
    \end{tikzpicture}
    \vspace{-7mm}
    \caption{Example 2-VASS with one unary counter $V$.
    Consider the instance of coverability consisting of $V$, the initial configuration $q(0, 1)$, and the target configuration $q(0, 10)$.
    Consider the path $\pi = \lambda\rho \; \lambda\rho \cdots \lambda\rho \; \rho \cdots \rho$ which induces a run in $V$ from the initial configuration $q(0,1)$.
    There are $990$ repetitions of the pair of cycles $\lambda\rho$ to witness the configuration $q(990,1)$.
    The cycles alternate so both counters remain non-negative throughout the run.
    This is followed by $10$ iterations of the cycle $\rho$ so the configuration $q(0,11)$ is witnessed, achieving coverability of the target configuration $q(0,10)$.}
    \label{fig:example}
\end{figure}

The natural starting point is the characterisation of runs via \emph{linear path schemes}~\cite{BlondinEFGHLMT21}. 
Intuitively, the authors prove that if coverability or reachability holds then there is a witnessing path of a specific shape. 
Namely, all paths can be characterised by a bounded language defined by a regular expression of the form $\tau_0 \gamma_1^* \tau_1 \ldots \tau_{k-1} \gamma_k^* \tau_k$.
Here $\tau_0, \dots, \tau_k$ are paths that connect disjoint cycles $\gamma_1, \dots, \gamma_k$.
Since the language is bounded, checking if there is a path for a given expression essentially amounts to an instance of integer linear programming.
In particular, the authors argue that both $k$ and $\abs{\tau_0} + \abs{\gamma_1} + \abs{\tau_1} + \ldots + \abs{\tau_{k-1}} + \abs{\gamma_k} + \abs{\tau_k}$ are pseudo-polynomially bounded~\cite{BlondinEFGHLMT21}.
However, a polynomial bound would immediately yield an \class{NP} upper bound as such a regular expression can be guessed. 
Given that coverability in 2-VASS with two binary counters is \class{PSPACE}-hard~\cite{BlondinFGHM15}, we cannot simply directly apply the known results when dealing with 2-VASS with one binary and one unary counter.
In Section~\ref{sec:coverability}, we provide a detailed discussion and a difficult yet motivating example in Figure~\ref{fig:example-main}.

To overcome this problem, we show that coverability can be witnessed by paths in \emph{compressed linear form}. 
We relax the condition of the bounded language, by allowing to nest linear forms, provided that the exponents are fixed. 
Intuitively, an expression of the form $(\tau \gamma^* \tau')^*$ is still forbidden, but we allow for $(\tau \gamma^e \tau')^*$, where $e$ is fixed but can be exponentially large (encoded using polynomially many bits). Such a form easily provides an \class{NP} upper bound.

We rely on two crucial observations to prove that we can focus on paths in compressed linear form.
First, notice that the $*$ operation in a linear path scheme corresponds to iterating some cycle in the VASS. 
Since $\gamma_1, \ldots, \gamma_k$ need to be short, one naturally focuses on short cycles. 
The issue is that there are exponentially many cycles of polynomial size. 
In Section~\ref{sec:replacing} we prove that for coverability there are only polynomially many `optimal' cycles. 
In Section~\ref{sec:reshuffling} we deal with the problem when some cycle $\gamma$ occurs many times in a linear path scheme witnessing coverability, resulting in a polynomial bound on $k$, the width of the linear path scheme. 
Then we prove that, either we can merge some $\gamma_i$ and $\gamma_j$ thus reducing the width, or that there is a cycle that has positive effect on one counter and non-negative effect on the other counter. 
Intuitively, in the latter case, we can reduce the problem to coverability in 1-VASS by pumping such a cycle that forces one counter to take an arbitrarily large value. 
Moreover, such a cycle is witnessed by a linear path scheme. 
Since we need to pump this cycle, we require compressed linear forms to describe the repetitions of the cycle. 

We highlight that both our crucial observations rely on that we work with coverability, not reachability. 
We further highlight that we address these crucial observations through our technical contributions that often depend on the fact there is one unary counter.

\paragraph*{Further Related Work.}
Asymmetric treatment of the counters has been already considered for VASS. 
Recall that Minsky machines can be seen as VASS with the additional ability of zero-testing. 
For this model coverability is undecidable~\cite{Minsky67}, even with two counters. 
This raised natural questions of what happens where only one of the counters is able to be reset or tested for zero. 
This, and more generally, reachability in VASS with hierarchical zero-tests are known to be decidable~\cite{Reinhardt08}.
There is a further investigation into VASS with one zero-test~\cite{FinkelS10}.
Recently, work has appeared containing detailed analysis about 2-VASS where counters have different powers~\cite{FinkelLS18,LerouxS20}. 
Finally, one of the most famous open problems in the community is whether reachability is decidable for 1-VASS with a pushdown stack. 
For these systems, coverability is known to be decidable~\cite{LerouxST15}.
The best known lower bound is that coverability, thus reachability also, is \class{PSPACE}-hard~\cite{EnglertHLLLS21}.
Our model, 2-VASS with one unary counter, can be seen as 1-VASS with a singleton alphabet pushdown stack.

The complexity of reachability in binary encoded 3-VASS remains an intriguing open problem. 
It is \class{PSPACE}-hard, like in dimension two, and the only known upper bound is primitive recursive, but not even elementary~\cite{LerouxS19}.
Recent works on reachability in fixed dimension VASS~\cite{CzerwinskiLLP19,Czerwinski0LLM20,CzerwinskiO22} provide new examples and a better understanding of the VASS model. 
Interestingly, many techniques applied to fixed dimension VASS are very closely related to recent progress on the nonelementary and Ackermann lower bounds for general VASS~\cite{CzerwinskiLLLM21,CzerwinskiO21,Leroux21}.
We finally and additionally motivate coverability in VASS with one binary counter and (at least) one unary counter as an avenue for finding new techniques to approach VASS problems with.

\section{Preliminaries} \label{sec:preliminaries}

Given an integer $z \in \Z$ we denote $\bit{z} = \log_2(\abs{z} + 1) + 1$.
For a vector $\vec{v} \coloneqq (v_1,v_2)$ we use $(\vec{v})_1 \coloneqq v_1$ and $(\vec{v})_2 \coloneqq v_2$ to be the projections to the first and second coordinates, respectively. 
We use $|\vec{v}|_{\max} \coloneqq \max\{|v_1|,|v_2|\} + 1$ to denote the size of vector $\vec{v}$.
We write $\vec{v} \leq \vec{w}$ if the inequalities hold on each coordinate. 
We write $\vec{v} < \vec{w}$ if at least one of the inequalities is strict.

A \emph{2-VASS with one unary counter} $V = (Q, T)$ consists of a finite set of control \emph{states} $Q$ and a set of \emph{transitions} $T \subset Q \times \integers \times \set{-1, 0, 1} \times Q$. 
We shall refer to the first counter as the \emph{binary counter} and the second counter as the \emph{unary counter}. 
The size of $V$ is $\abs{V} = \abs{Q} + \sum_{(p,b,u,q) \in T} \bit{b}$.
With $\abs{V}_{\max} \coloneqq \abs{Q} + \abs{T}\cdot \abs{T}_{\max}$ we denote the total `pseudo-polynomial size' of the automaton, where $\abs{T}_{\max}$ denotes the maximum absolute value that occurs in the transitions.
Note that in a standard 2-VASS both counters are in binary, \ie the domain of updates for the second counter is also $\Z$.

A \emph{path} $\pi$ in $V$ is a, possibly empty, sequence of transitions $\pi = (t_i)_{i=1}^m$ such that $t_i = (q_{i-1}, b_i, u_i, q_i) \in T$.
A path is \emph{simple} if $q_0, \ldots, q_m$ are distinct.
A path is a \emph{cycle} if $q_0 = q_m$ and $m > 0$ (thus empty cycles are forbidden). 
We call it a $q_0$-cycle to emphasise the first and last state of the cycle. 
A cycle is \emph{simple} if $q_1, \ldots, q_m$ are distinct.
A cycle is \emph{short} if $m \leq \abs{Q}$.
The \emph{length} of a path is the number of transitions in the path, denoted $\len{\pi} = m$.
We write $\pi[i..j]$ to denote the path that is the subsequence of transitions $(t_i, \ldots, t_j)$ in $\pi$.

A \emph{configuration} $(p, \vec{u}) \in Q \times \naturals^2$, denoted $\config{p}{u}$, is a state paired with the current binary and unary counter values. 
A \emph{run} is a sequence of configurations $(\configuration{q_i}{v}{_i})_{i=0}^m$ such that $(q_{i-1}, (\vec{v}_i)_1 - (\vec{v}_{i-1})_1, (\vec{v}_i)_2 - (\vec{v}_{i-1})_2, q_i) \in T$.
A run can equivalently be defined by the sequence of configurations induced by following a path $\pi$ starting from an initial configuration $\configuration{q_0}{v}{_0}$. 
We denote this run $\run{\configuration{q_0}{v}{_0}}{\pi}{\configuration{q_m}{v}{_m}}$. 
We also write $\run{\configuration{q_0}{v}{_0}}{*}{\configuration{q_m}{v}{_m}}$ to indicate the existence of a run between two configurations.

In this paper we study the \emph{coverability} problem for VASS.
\vspace{2mm}
\problemx{VASS Coverability}
{A VASS $V = (Q,T)$ and two configurations $\config{p}{u}$ and $\config{q}{v}$.}
{Does $\run{\config{p}{u}}{*}{\covfig{q}{v}}$ hold, for some $\vec{v}' \geq \vec{v}$?}
\vspace{2mm}
Do note that the initial configuration $\config{p}{u}$ and the target configuration $\config{q}{v}$ have both the binary and unary components encoded as binary integers.
The \emph{reachability problem} for VASS---which we will not study in this paper---requires $\vec{v}' = \vec{v}$.

Consider a path $\pi = (t_i)_{i=1}^m$, where $t_i = (q_{i-1}, b_i, u_i, q_i)$. 
The \emph{effect} of $\pi$ is the sum of the counter updates, \ie the vector $\eff{\pi} \coloneqq \sum_{i=1}^m (b_i,u_i)$. We often focus on the two projections: the \emph{binary effect} $\bineff{\pi} \coloneqq \sum_{i=1}^m b_i$, and the \emph{unary effect} $\unaeff{\pi} \coloneqq \sum_{i=1}^m u_i$.

We say that a cycle $\gamma$ is \emph{monotone} if $\eff{\gamma} \ge \vec{0}$ or $\eff{\gamma} \leq \vec{0}$. 
Otherwise, we say that $\gamma$ is \emph{non-monotone}. 
Note the two variants of a non-monotone cycle: a \emph{positive-negative} cycle $\bineff{\gamma} > 0$ and $\unaeff{\gamma} < 0$, and a \emph{negative-positive} cycle $\bineff{\gamma} < 0$ and $\unaeff{\gamma} > 0$.

Let $\gamma$ be a cycle. 
Given $e \in \N$ we write $\gamma^e$ for the path obtained by $e$ repetitions of $\gamma$. 
We refer to $e$ as the \emph{exponent}.
A linear path scheme is a regular expression of the form $\tau_0 \gamma_1^* \tau_1 \cdots \tau_{k-1} \gamma_k^* \tau_k$, where the paths $\tau_0, \tau_1, \ldots, \tau_k$ connect disjoint cycles $\gamma_1, \ldots, \gamma_k$.
Note that a collection of cycles is disjoint if no two cycles have a common state.
Given $\ell = (\tau_0, \gamma_1, \tau_1, \dots, \tau_{k-1}, \gamma_k, \tau_k)$, we say the a path $\pi$ is in linear form $\ell$ if $\pi = \pi_\ell = \tau_0 \gamma_1^{e_1} \tau_1 \cdots \tau_{k-1} \gamma_k^{e_k} \tau_k$ for some exponents $e_1, \ldots, e_k$. 
Note that in this definition every path has a linear form, \eg $\tau_0 = \pi$ is valid.
To leverage the definition, we will ask whether paths are in a linear form of certain size.
The size of a linear form $\ell$ is $\sum_{i=0}^{k} \len{\tau_i} + \sum_{i = 1}^k \len{\gamma_i}$.
The size of $\pi_\ell$ is $ \sum_{i=0}^{k} \len{\tau_i} + \sum_{i = 1}^k \len{\gamma_i} + \sum_{i = 1}^k \bit{e_i}$, \ie includes the exponents.
We refer to $k$ as the \emph{width} of the linear form.

\section{Coverability in 2-VASS with One Unary Counter} \label{sec:coverability}

In this section we briefly discuss why the state-of-the-art techniques are not enough to prove that coverability in 2-VASS with one unary counter is in \class{NP}.
Blondin et al.~\cite{BlondinEFGHLMT21} show that for a given 2-VASS $V$ there exists a set of linear path schemes $S$ such that if $\run{\config{p}{u}}{*}{\config{q}{v}}$ in $V$, then there exists a path $\pi$ in a linear path scheme $\rho \in S$ such that $\run{\config{p}{u}}{\pi}{\config{q}{v}}$.
For every linear path scheme $\rho \in S$ the width of $\rho$, and therefore the width of every path, is bounded above by $\poly{|Q|,|T|_{\max}}$~\cite[Theorem 3.1]{BlondinEFGHLMT21}.
Such a path $\pi$ is not necessarily a polynomial size witness, as the width depends on $\abs{T}_{\max}$ polynomially. 
We provide an example of a 2-VASS with one unary counter where the width of every linear form $\ell$ for a path is exponential in the input size.
This demonstrates that the combinatorial structure of linear path schemes is not self-sufficient to show that there always exists a polynomial size witness of coverability.

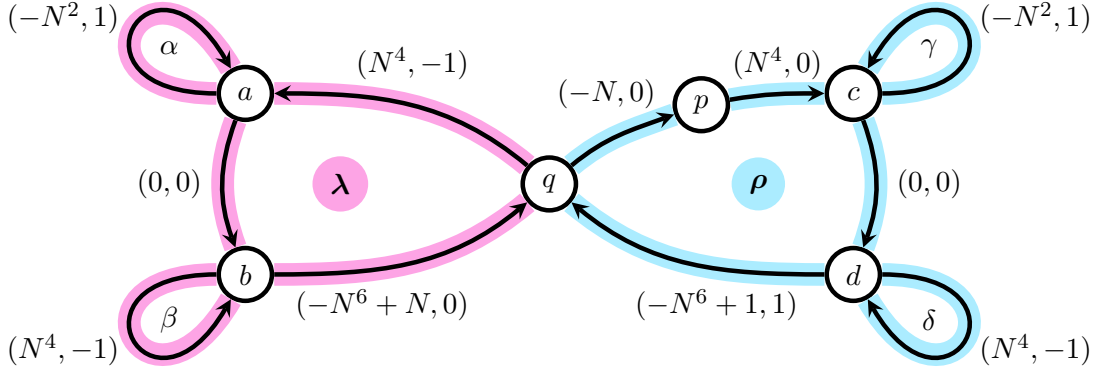
\begin{figure}[ht!]
    \begin{center}
        \begin{tikzpicture}
        \node[circle, draw, minimum size = 7mm, line width = 0.6mm] (q) at (0, 0) {$q$};
        \node[circle, draw, minimum size = 7mm, line width = 0.6mm] (p) at (2, 1.05) {$p$};
        \node[circle, draw, minimum size = 7mm, line width = 0.6mm] (a) at (-4, 1.2) {$a$};
        \node[circle, draw, minimum size = 7mm, line width = 0.6mm] (b) at (-4, -1.2) {$b$};
        \node[circle, draw, minimum size = 7mm, line width = 0.6mm] (c) at (4, 1.2) {$c$};
        \node[circle, draw, minimum size = 7mm, line width = 0.6mm] (d) at (4, -1.2) {$d$};

        \node[circle, minimum size = 2.5mm, fill = lightpink] (left) at (-2.75, 0) {$\boldsymbol\lambda$};
        \node[circle, minimum size = 2.5mm, fill = lightblue] (right) at (2.75, 0) {$\boldsymbol\rho$};

        \node (qa) at (-1.8, 1.6) {$(N^4, -1)$};
        \draw[-, >=stealth, line width = 3mm, color = lightpink] (q) to[out = 140, in = 0] (a);
        \draw[->, >=stealth, line width = 0.6mm] (q) to[out = 140, in = 0] (a);

        \node (ab) at (-5, 0) {$(0, 0)$};
        \draw[-, >=stealth, line width = 3mm, color = lightpink] (a) to[out = 250, in = 110] (b);
        \draw[->, >=stealth, line width = 0.6mm] (a) to[out = 250, in = 110] (b);

        \node (bq) at (-2.2, -1.6) {$(-N^6 + N, 0)$};
        \draw[-, >=stealth, line width = 3mm, color = lightpink] (b) to[out = 0, in = 220] (q);
        \draw[->, >=stealth, line width = 0.6mm] (b) to[out = 0, in = 220] (q);

        \node (qp) at (0.75, 1.2) {$(-N, 0)$};
        \draw[-, >=stealth, line width = 3mm, color = lightblue] (q) to[out = 40, in = 200] (p);
        \draw[->, >=stealth, line width = 0.6mm] (q) to[out = 40, in = 200] (p);

        \node (pc) at (3, 1.6) {$(N^4, 0)$};
        \draw[-, >=stealth, line width = 3mm, color = lightblue] (p) to[out = 10, in = 180] (c);
        \draw[->, >=stealth, line width = 0.6mm] (p) to[out = 10, in = 180] (c);

        \node (cd) at (5, 0) {$(0, 0)$};
        \draw[-, >=stealth, line width = 3mm, color = lightblue] (c) to[out = 290, in = 70] (d);
        \draw[->, >=stealth, line width = 0.6mm] (c) to[out = 290, in = 70] (d);

        \node (dq) at (2.2, -1.6) {$(-N^6 + 1, 1)$};
        \draw[-, >=stealth, line width = 3mm, color = lightblue] (d) to[out = 180, in = 320] (q);
        \draw[->, >=stealth, line width = 0.6mm] (d) to[out = 180, in = 320] (q);

        \node (ca) at (-6.4, 2.2) {$(-N^2, 1)$};
        \path (a) edge[loop above, looseness = 50, distance = 60, out = 180, in = 120, line width = 3mm, color = lightpink] (a);
        \path (a) edge[-stealth, loop above, looseness = 50, distance = 60, out = 180, in = 120, line width = 0.6mm] (a);
        \node (ga) at (-5, 1.8) {$\alpha$};

        \node (cb) at (-6.4, -2.2) {$(N^4, -1)$};
        \path (b) edge[loop above, looseness = 50, distance = 60, out = 180, in = 240, line width = 3mm, color = lightpink] (b);
        \path (b) edge[-stealth, loop above, looseness = 50, distance = 60, out = 180, in = 240, line width = 0.6mm] (b);
        \node (gb) at (-5, -1.8) {$\beta$};
  
        \node (cc) at (6.4, 2.2) {$(-N^2, 1)$};
        \path (c) edge[loop above, looseness = 50, distance = 60, out = 0, in = 60, line width = 3mm, color = lightblue] (c);
        \path (c) edge[-stealth, loop above, looseness = 50, distance = 60, out = 0, in = 60, line width = 0.6mm] (c);
        \node (gd) at (5, 1.8) {$\gamma$};

        \node (cd) at (6.4, -2.2) {$(N^4, -1)$};
        \path (d) edge[loop above, looseness = 50, distance = 60, out = 0, in = 300, line width = 3mm, color = lightblue] (d);
        \path (d) edge[-stealth, loop above, looseness = 50, distance = 60, out = 0, in = 300, line width = 0.6mm] (d);
        \node (gd) at (5, -1.8) {$\delta$};
    \end{tikzpicture}
    \vspace{-10mm}
    \end{center}
    \caption{
    Example 2-VASS with one unary counter $V$, where $N = 2^{n}$, where $n$ is an input parameter (thus making $N$ exponentially large).
    Consider the coverability instance with the initial configuration $q(0,1)$, and the target configuration $q(N, 1)$.
    Let $\lambda = t_{qa}\alpha^{N^2}t_{ab}\beta^{N^2}t_{bq}$ and $\rho = t_{qp}t_{pc}\gamma^{N^2}t_{cd}\delta^{N^2}t_{dq}$, where $t_{xy}$ is the transition from state $x$ to state $y$.
    Observe that $\eff{\lambda} = (N, -1)$ and $\eff{\rho} = (-N+1, 1)$, thus $\eff{\lambda\rho} = (1,0)$.
    It is easy to then see that $\run{q(0,1)}{(\lambda\rho)^N}{q(N,1)}$.
    Intuitively the cycles $\lambda$ and $\rho$ alternate so both counters remain non-negative throughout the run.
    In Appendix~\ref{app:coverability}, we prove that there does not exist a linear form of polynomial size for a path that induces a coverability run (Claim~\ref{clm:no-poly-linear-form-path-example}).
    }
    \label{fig:example-main}
\end{figure}

\paragraph*{Paths in Compressed Linear Form.}
Nevertheless, there is a natural way to succinctly describe the path presented in Figure~\ref{fig:example-main}. 
Let $\sigma = \lambda\rho$, and note that
\begin{equation*}
    \sigma^N = \left(t_{qa} \; \alpha^{N^2} \; t_{ab} \; \beta^{N^2} \; t_{bq}t_{qp}t_{pc} \; \gamma^{N^2} \; t_{cd} \; \delta^{N^2} \; t_{dq}\right)^{N}.
\end{equation*}
All paths and cycles are `small', and the bitsize of $N$ and $N^2$ are polynomial in $n$, so $\sigma$ itself is a path in linear form. 
We introduce the following generalisation of linear form paths that encapsulates the idea behind paths of this kind of arrangement.

\begin{definition}[Compressed linear form path]
A path $\pi$ is in \emph{compressed} linear form if $\pi = \rho_0 \sigma_1^{f_1} \rho_1 \cdots \rho_{k-1} \sigma_k^{f_k} \rho_k$ for some connected paths in linear form $\rho_0, \rho_1, \ldots, \rho_k$; cycles in linear form $\sigma_1, \ldots, \sigma_k$; and exponents $f_1, \ldots, f_k$.
The size of a compressed linear form path is the sum of the sizes of all $\rho_i$ and $\sigma_i$ (including the bitsize of their exponents) plus the bitsize of the exponents $f_i$.
\end{definition}

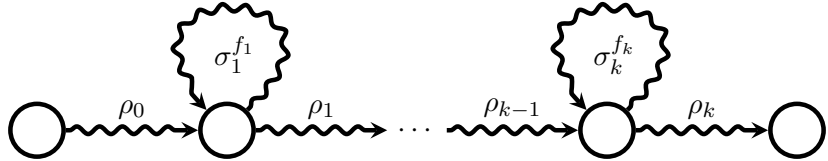
\begin{figure}[ht!]
    \begin{center}
    \begin{tikzpicture}
      \node[circle, draw, minimum size = 7mm, line width = 0.6mm] (q0) at (0, 0)    {};
      \node[circle, draw, minimum size = 7mm, line width = 0.6mm] (q1) at (2.5, 0)    {};
        \node (c1) at (2.6, 1) {$\sigma_1^{f_1}$};
      \node(q) at (5, 0) {$\cdots$};
      \node[circle, draw, minimum size = 7mm, line width = 0.6mm] (qk) at (7.5, 0)   {};
        \node (ck) at (7.6, 1) {$\sigma_k^{f_k}$};
      \node[circle, draw, minimum size = 7mm, line width = 0.6mm] (qf) at (10, 0)   {};
      \draw[-stealth, decoration={snake, amplitude = 0.5mm, segment length = 3mm, pre length = 0.5mm, post length=2mm}, decorate, line width = 0.6mm] 
      (q0) -- node[above]   {$\rho_0$}      (q1);
      \draw[-stealth, decoration={snake, amplitude = 0.5mm, segment length = 3mm, pre length = 0.51mm, post length=2mm}, decorate, line width = 0.6mm] 
      (q1) -- node[above]   {$\rho_1$}      (q);
      \draw[-stealth, decoration={snake, amplitude = 0.5mm, segment length = 3mm, pre length = 0.5mm, post length=2mm}, decorate, line width = 0.6mm] 
      (q)  -- node[above]   {$\rho_{k-1}$}  (qk);
      \draw[-stealth, decoration={snake, amplitude = 0.5mm, segment length = 3mm, pre length = 0.5mm, post length=2mm}, decorate, line width = 0.6mm] 
      (qk) -- node[above]   {$\rho_k$}      (qf);
      \path (q1) edge[-stealth, loop above, looseness = 25, distance = 70, out = 45, in = 135, decoration={snake, amplitude = 0.5mm, segment length = 3mm, pre length = 0.5mm, post length=2mm}, decorate, line width = 0.6mm] (q1);
      \path (qk) edge[-stealth, loop above, looseness = 25, distance = 70, out = 45, in = 135, decoration={snake, amplitude = 0.5mm, segment length = 3mm, pre length = 0.5mm, post length=2mm}, decorate, line width = 0.6mm] (qk);
    \end{tikzpicture}
    \end{center}
    \vspace{-2mm}
    \caption{A compressed linear form path.}
    \label{fig:compressed-linear-form}
\end{figure}

The following theorem is our main contribution.
\begin{theorem} 
    \label{thm:main-theorem}
    Let $V$ be a 2-VASS with one unary counter and fix two configurations
    $\config{p}{u}$ and $\config{q}{v}$. If $\run{\config{p}{u}}{\ast}{\config{q}{v}}$, then there exists a
    path in compressed linear form $\pi$ such that
    $\run{\config{p}{u}}{\pi}{\covfig{q}{v}}$ and $\vec{v}' \geq \vec{v}$.
    The size of the compressed linear form path is polynomial in $\abs{V} + \bit{\vec{u}} + \bit{\vec{v}}$.
\end{theorem}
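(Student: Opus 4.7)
My plan is to follow the two-phase strategy (replacing, then reshuffling) sketched by the authors in the Introduction, starting from known linear-path-scheme structure and then exploiting the asymmetry between the binary and unary counters.

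\textbf{Phase 1 (Starting scheme).} I would begin from the linear path scheme result of Blondin et al.~\cite{BlondinEFGHLMT21}: if $\run{\config{p}{u}}{\ast}{\covfig{q}{v}}$ with $\vec{v}' \geq \vec{v}$, there is a witnessing path $\pi$ in some linear form $\ell = (\tau_0, \gamma_1, \tau_1, \dots, \gamma_k, \tau_k)$ whose individual $\tau_i$ and $\gamma_i$ have polynomial length, and whose width $k$ and exponents are bounded pseudo-polynomially in $|Q|$ and $|T|_{\max}$. The obstruction is that $|T|_{\max}$ may be exponential in $|V|$ due to the binary counter, so the width $k$ need not be polynomial in $|V|$; the example of \cref{fig:example-main} shows that no polynomial-width ordinary linear form suffices in general, motivating the compressed-form generalisation.

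\textbf{Phase 2 (Replacing by optimal short cycles).} Next I would normalise the cycles. Each $\gamma_i$ can be assumed simple of length at most $|Q|$ after decomposition. The key step is to isolate a \emph{polynomial-size} pool of short cycles that are ``optimal'' for coverability, meaning every effect vector reachable by a useful cycle is dominated, in the coverability-relevant direction, by an effect of some pool cycle. The enabling observation is precisely the asymmetry of our model: since the unary counter has updates in $\{-1,0,1\}$, a short simple cycle $\gamma$ has $\unaeff{\gamma} \in [-|Q|, |Q|]$, so relevant cycles partition into only $O(|Q|)$ unary-effect classes, and within each class a polynomial number of extremal binary effects generate the Pareto frontier. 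Thus without loss of generality every $\gamma_i$ in $\ell$ is drawn from a polynomial-size pool of short cycles.

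\textbf{Phase 3 (Reshuffling and compression).} With cycles drawn from a polynomial pool, if the width $k$ is still super-polynomial then by pigeonhole some cycle $\sigma$ recurs very many times. The reshuffling argument gives a dichotomy. Either two occurrences of $\sigma$ can be moved together and merged (strictly reducing the width, so we may recurse); or, after a constant number of such manipulations, we expose a non-monotone ``good'' cycle $\mu$ whose effect is strictly positive on one counter and non-negative on the other, itself described by a short linear path scheme. Pumping $\mu$ forces one counter to grow unboundedly, which effectively removes that counter from the coverability constraint and reduces the remainder of the run to coverability in a 1-VASS, for which a polynomial-size flat (linear form) witness exists. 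The overall witness therefore has the shape of a polynomial-width outer scheme $\rho_0 \sigma_1^{f_1} \rho_1 \cdots \sigma_k^{f_k} \rho_k$ whose inner cycles $\sigma_i$ are themselves short linear-form paths with fixed polynomial-bit exponents, which matches the compressed linear form of the definition. Exponent bitsizes are controlled by a standard short-solution argument for the induced integer programs over $|V| + \bit{\vec{u}} + \bit{\vec{v}}$.

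\textbf{Main obstacle.} The subtlest point I expect is guaranteeing that only \emph{one} level of nesting is ever needed, i.e.\ that Phase~3 does not recursively generate sub-instances requiring their own compressed forms. This hinges on the unary-counter restriction: once a good cycle $\mu$ is pumped the residual problem is genuinely one-dimensional, and the known 1-VASS coverability machinery produces a flat witness, so no further nesting arises. Keeping the connecting paths $\rho_i$, prefixes and suffixes of polynomial total length throughout the reshuffling, while simultaneously maintaining that the exponents inside $\sigma_i^{f_i}$ stay of polynomial bitsize, will require the careful combinatorial bookkeeping developed in \cref{sec:replacing} and \cref{sec:reshuffling}.
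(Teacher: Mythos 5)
Your overall architecture does match the paper's: a polynomial pool of short cycles, a reshuffling dichotomy, pumping a cycle with positive effect and finishing with a flat residual witness, so that one level of nesting suffices, with exponents bounded via small-solution arguments. However, your Phase~2 has a genuine gap. You justify the polynomial pool purely in terms of effects (``every effect vector reachable by a useful cycle is dominated \ldots by an effect of some pool cycle'', ``a polynomial number of extremal binary effects generate the Pareto frontier''). Replacing a cycle by one with a dominating effect is not sound for feasibility: the replacement may dip lower inside the cycle and drive a counter negative. The paper's notion of replaceability (Definition~\ref{def:replaceable-cycle}) therefore also requires domination of both guards, non-increase of length, and a fixed binary-nadir state. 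More importantly, even with guards included, the claim that only polynomially many undominated short cycles exist is exactly the nontrivial content of Lemma~\ref{lem:number-of-irreplaceable-cycles}, and it does not follow from the fact that $\unaeff{\gamma} \in [-\abs{Q},\abs{Q}]$: a priori there could be exponentially many pairwise incomparable pairs $(\bineff{\gamma}, \binguard{\gamma})$ among short cycles within a single unary class. The paper excludes this by the binary-nadir decomposition swap (splicing the prefix of one cycle with the suffix of another having matching unary data and segment lengths to produce a dominating hybrid $\gamma_1'\gamma_2$); your sketch contains no argument playing this role, and without it the pool could still be exponential.

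Phase~3 also glosses a quantitative step. After pumping the positive cycle, the other constraints do not simply vanish: the pumped counter can still decrease along the remainder of the run, so one must both fix how large the boost has to be and exhibit a flat residual witness whose worst-case decrease on the pumped coordinate is controlled. The paper does this in Lemma~\ref{lem:one-counter}, using the pseudo-polynomial bound on intermediate counter values (Theorem~\ref{thm:blondin}) to set $\vec{x}$, and building the residual from simple paths plus one simple semi-positive cycle; invoking ``the known 1-VASS coverability machinery'' is not sufficient because the residual instance is still a 2-VASS in which the pumped counter participates. Two smaller inaccuracies: the exposed cycle is \emph{monotone} in the paper's terminology (non-negative on both coordinates), not non-monotone, and its strict positivity $\eff{\sigma} > \vec{0}$ comes from a minimal-length argument in Lemma~\ref{lem:decomposition} (a monotone cycle with non-positive effect can simply be deleted), not from reshuffling itself; moreover the second branch of the reshuffling dichotomy in Lemma~\ref{lem:reshuffling} produces the cycle by pigeonholing polynomially many low-unary-value configurations, not after ``a constant number of manipulations.''
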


\begin{corollary}
    \label{cor:coverability-in-np}
    Coverability in 2-VASS with one unary counter is in \class{NP}.
\end{corollary}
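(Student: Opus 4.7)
The plan is to derive the corollary directly from \cref{thm:main-theorem} via a standard guess-and-verify strategy. By \cref{thm:main-theorem}, if $\config{p}{u}$ covers $\config{q}{v}$, then there is a compressed linear form path $\pi = \rho_0 \sigma_1^{f_1} \rho_1 \cdots \rho_{k-1} \sigma_k^{f_k} \rho_k$ whose size is polynomial in $\abs{V} + \bit{\vec{u}} + \bit{\vec{v}}$. An \class{NP} machine nondeterministically guesses such a description---the sequence of states and transitions, the inner exponents of each linear-form factor $\rho_i$ and $\sigma_i$, and the outer exponents $f_1, \ldots, f_k$---which by the size bound fits in polynomially many bits.

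I would then verify deterministically in polynomial time that the guessed $\pi$ induces a valid coverability run. Syntactic validity---each transition lies in $T$, consecutive factors share endpoints, and each $\sigma_i$ returns to its starting state---is immediate from the description. For the semantic check, the unfolded path has exponential length so naive simulation is infeasible; instead I would compute, for each subexpression, its effect $\eff{\cdot}$ together with a minimum drop (the most negative running sum attained during one execution, starting from the zero vector). Both quantities are obtained in a single scan for a simple path. For a cycle $\gamma^e$ iterated from starting counters $\vec{c}$, the per-coordinate minimum is reached either in the first iteration (when $\eff{\gamma}$ is non-negative on that coordinate) or in the last iteration (when it is negative); in each case the minimum admits a closed-form expression in terms of $\vec{c}$, $e$, $\eff{\gamma}$, and the drop of $\gamma$, of polynomial bitsize. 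This propagates through each linear-form factor $\rho_i$ and, one level up, through each $\sigma_i^{f_i}$, since $\sigma_i$ itself is in linear form.

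The main obstacle I anticipate is carefully handling the doubly-nested compressed structure: the minimum reached inside $\sigma_i^{f_i}$ depends on the initial counters at the start of that block and on $\eff{\sigma_i}$, which in turn depends on the inner exponents of $\sigma_i$. Because these exponents are binary-encoded and every arithmetic operation (sums, scaling, prefix-minima, comparisons) stays within polynomial bitsize, the entire recursion runs in polynomial time. Once every intermediate minimum is certified non-negative on both counters and the cumulative effect starting from $\config{p}{u}$ yields a configuration $\covfig{q}{v}$ with $\vec{v}' \geq \vec{v}$, the verifier accepts; otherwise, by \cref{thm:main-theorem}, no accepting guess exists, placing coverability in \class{NP}.
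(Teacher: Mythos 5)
Your proposal is correct and follows essentially the same route as the paper: guess the polynomial-size compressed linear form path given by Theorem~\ref{thm:main-theorem}, then verify it in polynomial time using the observation that, since the nested cycles and exponents are fixed, counter non-negativity only needs to be checked around the first and last iteration of each cycle block (the paper phrases this verification as an instance of integer linear programming, citing~\cite[Section V, Lemma 14]{BlondinFGHM15}, but your direct closed-form computation of effects and minimum drops is the same underlying idea).
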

\begin{proof} 
By Theorem~\ref{thm:main-theorem} it suffices to consider paths in compressed linear form of polynomial size, that can be guessed in \class{NP}. 
It suffices to observe that a coverability instance on a given compressed linear form amounts to an instance of integer linear programming. 
Intuitively, this is because the nested cycles are fixed. 
Thus to check whether a run drops below zero it suffices to check before applying a cycle and after applying it for the last time (see \eg~\cite[Section V, Lemma 14]{BlondinFGHM15}).
\end{proof}

We highlight that it is rather unexpected that only one extra `level' of linear form paths is enough to obtain polynomial size witnesses of coverability in a 2-VASS with one unary counter, since the problem is \class{PSPACE}-complete for general 2-VASS. 
Roughly speaking, the example given in Figure~\ref{fig:example-main} observes the most complex behaviour possible and this instance of coverability is witnessed by a compressed linear form path.
More specifically, compressed linear form paths containing only one linear form cycle suffice as witnesses for coverability in 2-VASS with one unary counter. 
Therefore, all witnesses can be represented by a compressed linear form path $\rho \sigma^N \tau$ where $\rho$ and $\tau$ are linear form paths to and from the single linear form cycle $\sigma$ which is iterated $N$ times.

The rest of the paper is dedicated to proving Theorem~\ref{thm:main-theorem}. 
We heavily exploit both distinguishing features of the problem: the fact that one counter receives unary encoded updates (as opposed to both counters in binary) and the fact that we aim to assert coverability (as opposed to reachability). 
Our approach is as presented in the introduction.
In~\ref{sec:replacing} we observe that we can polynomially bound the total number of distinct short cycles.
We formalise this and show that there are only polynomially many `irreplaceable' short cycles.
In~\ref{sec:reshuffling} we provide a `reshuffling procedure'. 
If some short cycle $\gamma$ repeats exponentially many times we aim to modify the path $\pi$ by moving the cycles $\gamma$ close to each other.
Then either every short cycle $\gamma$ will appear only in polynomially many `bundles' $\gamma^e$, or we find a cycle $\sigma$ such that $\eff{\sigma} > \vec{0}$. In the latter case, by pumping $\sigma$ we are essentially left with one counter.
Finally, in Section~\ref{subsec:main-proof} we conclude the proof of Theorem~\ref{thm:main-theorem}.

\section{Replacing Short Cycles} \label{sec:replacing}

In this section, we show that there are only polynomially many short cycles that need occur in a run witnessing coverability. 
Fix a path $\pi = (q_{i-1}, b_i, u_i, q_i)_{i=1}^k$. 
Let $0 \leq i_b, i_u \leq k$ be the first indices such that $g_b = \sum_{i=1}^{i_b} b_i$ and $g_u = \sum_{i=1}^{i_u}u_i$ are at their lowest, respectively. 
Note that $g_b, g_u \leq 0$ since by convention if we consider $i_b, i_u = 0$ then the sum evaluates to $0$. 
We call and denote these two numbers the \emph{binary guard} $\binguard{\pi} = g_b$ and the \emph{unary guard} $\unaguard{\pi} = g_u$.
The following claim, that is proved in Appendix~\ref{app:replacing}, immediately follows from these definitions.

\begin{claim} \label{clm:guard-at-nadir}
Both $\binguard{\pi[i_b+1 .. k]} = 0$ and $\unaguard{\pi[i_u+1 .. k]} = 0$. 
\end{claim}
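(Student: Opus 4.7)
The plan is to unpack the definitions of $i_b$, $i_u$, and of the guard operator, and then observe that the claim essentially reduces to the elementary fact that once a sequence of partial sums has attained its minimum, no subsequent partial sum can drop strictly below it. Concretely, for any path $\sigma = (q_{i-1},b_i,u_i,q_i)_{i=1}^m$ the binary guard is defined as $\binguard{\sigma} = \min_{0 \le j \le m} \sum_{i=1}^{j} b_i$, with the convention that the empty prefix ($j=0$) contributes $0$; in particular $\binguard{\sigma} \le 0$. An analogous definition handles $\unaguard{\sigma}$.

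The first step is to invoke the defining property of $i_b$: it is the \emph{first} index in $\{0,\dots,k\}$ where the partial sum of binary effects attains its minimum value $g_b$. Hence $\sum_{i=1}^{i_b} b_i = g_b$ and $\sum_{i=1}^{j} b_i \ge g_b$ for every $j \in \{0,\dots,k\}$. Next, I would rewrite the partial sums of the suffix path $\pi[i_b+1..k]$ in terms of the partial sums of $\pi$: for every $j \in \{i_b,\dots,k\}$,
\[
    \sum_{i=i_b+1}^{j} b_i \;=\; \sum_{i=1}^{j} b_i \;-\; \sum_{i=1}^{i_b} b_i \;=\; \sum_{i=1}^{j} b_i - g_b \;\ge\; 0,
\]
and equality holds at $j = i_b$ (the empty prefix of the suffix). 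Taking the minimum over $j$ therefore yields exactly $0$, which by definition is $\binguard{\pi[i_b+1..k]}$.

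The identical argument with $u_i$ in place of $b_i$ and $i_u$ in place of $i_b$ establishes $\unaguard{\pi[i_u+1..k]} = 0$, finishing the proof. There is no genuine obstacle in this claim; it is essentially a definitional unfolding. The only point requiring mild care is remembering that the guard is defined over all prefixes \emph{including} the empty one, which is what forces the suffix guard to be $0$ rather than merely non-negative; and that $i_b, i_u$ are chosen as the \emph{first} indices reaching the minimum, which is needed only to make these indices well-defined (the claim itself would hold for any index achieving the minimum).
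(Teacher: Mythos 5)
Your proof is correct and follows essentially the same argument as the paper: both rest on the fact that the partial sums of the suffix equal the partial sums of $\pi$ shifted by the minimum $g_b$ (resp.\ $g_u$), hence are non-negative, while the empty prefix of the suffix forces the guard to be exactly $0$. The paper phrases this as a short contradiction whereas you argue directly, but the content is identical, and your remark that only well-definedness (not the claim) needs $i_b, i_u$ to be the \emph{first} minimizing indices is accurate.
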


Much like the \emph{nadir} of a cycle in a one-counter net, defined in~\cite{AlmagorBHT20}, we define the \emph{binary-nadir state} as $q_{i_b}$, \ie the first state in which the binary counter first attains the lowest value when executing $\pi$. 
We call the \emph{binary-nadir decomposition} $\pi = \pi_1^b \pi_2^b$, for $\pi_1^b = \pi[1 .. i_b]$ and $\pi_2^b = \pi[i_b+1 .. k]$, as intimated in Claim~\ref{clm:guard-at-nadir}. 
Notice that this decomposition necessitates the binary guard of the path $\pi$ is equal to the binary effect of the prefix $\pi_1^b$, $\binguard{\pi} = \bineff{\pi_1^b} = \binguard{\pi_1^b}$.
Furthermore, the suffix of the binary-nadir decomposition has zero binary guard $\binguard{\pi_2^b} = 0$.
We primarily utilise binary-nadir states and binary-nadir decompositions, hence the omission of matching unary-nadir states and unary-nadir-decompositions. 

\begin{definition}[Replaceable cycles] \label{def:replaceable-cycle}
Let $\gamma$ be a $q$-cycle and let $p$ be the binary-nadir state of $\gamma$. 
We say that $\gamma$ is replaceable if there exists a $q$-cycle $\gamma'$ with the same binary-nadir state $p$, such that
\begin{enumerate}[(a)]
    \item $\bineff{\gamma'} \geq \bineff{\gamma}$ and $\unaeff{\gamma'} \geq \unaeff{\gamma}$,
    \item $\binguard{\gamma'} \geq \binguard{\gamma}$ and $\unaguard{\gamma'} \geq \unaguard{\gamma}$, and
    \item $\len{\gamma'} \leq \len{\gamma}$.
\end{enumerate}
Additionally, at least one inequality is strict and we write $\gamma \prec \gamma'$.
\end{definition}

We say a cycle is \emph{irreplaceable} if it is not replaceable.
We also say that an irreplaceable $q$-cycle $\gamma$ with the binary-nadir state $p$ is \emph{characterised} by the five values: $\bineff{\gamma}$, $\unaeff{\gamma}$, $\binguard{\gamma}$, $\unaguard{\gamma}$, and $\len{\gamma}$. 
The following lemma is proved in Appendix~\ref{app:replacing}.

\begin{lemma}[Replacing cycles] \label{lem:replacing}
Let $\pi = \pi_1 \gamma \pi_2$, where $\gamma$ is a $q$-cycle. 
Suppose $\run{\config{p}{u}}{\pi}{\config{q}{v}}$ then the following hold.
\begin{itemize}
    \item If $\gamma$ is replaceable, then there exists an irreplaceable $q$-cycle $\gamma \prec \gamma'$ such that $\run{\config{p}{u}}{\pi_1 \gamma' \pi_2}{\covfig{q}{v}}$.
    \item If $\gamma$ is irreplaceable, then for every irreplaceable $q$-cycle $\gamma'$ that has the same characterisation as $\gamma$, $\run{\config{p}{u}}{\pi_1 \gamma' \pi_2}{\covfig{q}{v}}$.
\end{itemize}
In both cases $\vec{v'} \geq \vec{v}$ and $\len{\pi} \geq \len{\pi_1 \gamma' \pi_2}$.
\end{lemma}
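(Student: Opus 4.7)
The plan is to handle both bullets by a single monotonicity argument: replacing $\gamma$ by a $q$-cycle $\gamma'$ satisfying conditions (a)--(c) of \cref{def:replaceable-cycle} (with equalities throughout in the second bullet) preserves validity of the run and weakly increases the final counter vector. First I would verify this single-step replacement; then, for the first bullet, iterate until an irreplaceable cycle is reached.

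For the single-step replacement, set $\vec{w} \coloneqq \vec{u} + \eff{\pi_1}$, the counter values upon entering $\gamma$ in the original run; these are also the values upon entering $\gamma'$ in the modified run. Validity of the original subrun on $\gamma$ gives $(\vec{w})_1 + \binguard{\gamma} \geq 0$ and $(\vec{w})_2 + \unaguard{\gamma} \geq 0$, and condition~(b) transfers the same lower bounds to $\gamma'$, so the counters stay non-negative throughout the $\gamma'$ phase. By condition~(a), the counter vector leaving $\gamma'$ dominates that leaving $\gamma$ coordinatewise, so the suffix $\pi_2$ is still executable (any partial-sum dip inside $\pi_2$ is absorbed by the larger starting vector) and its endpoint $\vec{v}'$ satisfies $\vec{v}' \geq \vec{v}$. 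Condition~(c) then gives $\len{\pi_1 \gamma' \pi_2} \leq \len{\pi}$. This already settles the second bullet, since the hypothesis of ``same characterisation'' imposes all of (a)--(c) with equality.

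For the first bullet I iterate: starting with $\gamma_0 \coloneqq \gamma$, while $\gamma_i$ is replaceable pick $\gamma_{i+1} \succ \gamma_i$. The single delicate point is termination. Every $\gamma_i$ has length at most $\len{\gamma}$ and therefore lies in the finite set of paths in $V$ of length at most $\len{\gamma}$; on this finite set $\prec$ is a strict partial order (asymmetry follows because a strict coordinate inequality in one direction cannot be simultaneously a non-strict inequality in the other), so the chain stabilises at some irreplaceable $\gamma'$. Composing the single-step argument across the chain yields the conclusion, and one also observes that the binary-nadir state $p$ is preserved at every step (it is fixed in the definition of $\prec$), so the guard comparisons remain meaningful throughout.

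I expect the main technical nuisance to be purely bookkeeping: carefully matching ``guard'' with ``nadir-decomposition'' when replacing a cycle inside a longer path (so that a dip inside $\gamma'$ is measured against $\vec{w}$ and not against any earlier low point reached by $\pi_1$), and verifying that the inequality $\vec{v}' \geq \vec{v}$ is not claimed to be strict even when $\gamma \prec \gamma'$ is strict. Neither of these poses a real obstacle once the nadir decomposition of \cref{clm:guard-at-nadir} is applied in turn to $\pi_1$, $\gamma'$, and $\pi_2$.
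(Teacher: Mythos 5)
Your proof is correct and takes essentially the same approach as the paper's: a direct monotonicity check that conditions (a)--(c) of \cref{def:replaceable-cycle} propagate through $\pi_1$, $\gamma'$, and $\pi_2$ in turn. In fact, your write-up is slightly more careful than the paper's appendix proof, which only verifies the single-step replacement and leaves implicit the iteration-and-termination argument (that $\prec$ is a strict partial order on the finite set of cycles of length at most $\len{\gamma}$ with a fixed binary-nadir state) needed to produce an \emph{irreplaceable} $\gamma'$ in the first bullet.
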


For convenience, we define the polynomial $R(\abs{Q}) \coloneqq \abs{Q}^4(\abs{Q}+1)(2\abs{Q}+1)^2$.

\begin{lemma} \label{lem:number-of-irreplaceable-cycles}
    There exists at most $R(\abs{Q})$ many irreplaceable short cycles with different characterisations.
\end{lemma}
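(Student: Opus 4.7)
The plan is to bound the number of irreplaceable short cycles with distinct characterizations by enumerating each of the seven parameters that identify a characterization: the start state $q$, the binary-nadir state $p$, and the five-tuple $(\bineff{\gamma}, \unaeff{\gamma}, \binguard{\gamma}, \unaguard{\gamma}, \len{\gamma})$. First I would collect the easy bounds. Each of $q$ and $p$ contributes a factor of $\abs{Q}$. Since $\gamma$ is short, $\len{\gamma} \in \{1, \ldots, \abs{Q}\}$, giving another factor of $\abs{Q}$. Because the 2-VASS has a single unary counter whose updates lie in $\{-1, 0, 1\}$, shortness forces $\unaeff{\gamma} \in [-\abs{Q}, \abs{Q}]$ and $\unaguard{\gamma} \in [-\abs{Q}, 0]$, contributing $2\abs{Q}+1$ and $\abs{Q}+1$ options respectively. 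Together these easy bounds already account for a factor of $\abs{Q}^{3}(\abs{Q}+1)(2\abs{Q}+1)$.

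The more delicate step is to bound the remaining pair $(\bineff{\gamma}, \binguard{\gamma})$, whose absolute values can a priori be exponentially large since binary transition updates are unbounded. Here I would exploit the irreplaceability condition itself: among all irreplaceable cycles sharing a common quintuple $(q, p, \len{\gamma}, \unaeff{\gamma}, \unaguard{\gamma})$, the admissible pairs $(\bineff{\gamma}, \binguard{\gamma})$ must form an antichain under componentwise~$\leq$, since otherwise, by \cref{def:replaceable-cycle}, one cycle would replace the other. Using the binary-nadir decomposition $\gamma = \gamma_1 \gamma_2$, where $\binguard{\gamma} = \bineff{\gamma_1}$ and $\len{\gamma_1} + \len{\gamma_2} = \len{\gamma} \leq \abs{Q}$, I would combine the antichain structure with a pigeonhole argument on the prefix length $\len{\gamma_1} \in \{0, 1, \ldots, \abs{Q}\}$ to cap the antichain at $\abs{Q}(2\abs{Q}+1)$ elements: once $\len{\gamma_1}$ is fixed, the companion value $\bineff{\gamma_2} = \bineff{\gamma} - \binguard{\gamma}$ together with the antichain condition leaves only $2\abs{Q}+1$ choices.

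Multiplying all factors yields a bound of $\abs{Q}^{4}(\abs{Q}+1)(2\abs{Q}+1)^{2} = R(\abs{Q})$, as desired. The main obstacle is precisely the $\abs{Q}(2\abs{Q}+1)$ bound on the binary antichain: unlike the unary quantities, a direct magnitude bound on $\bineff{\gamma}$ and $\binguard{\gamma}$ is unavailable, so the argument must leverage the structural properties imposed by the binary-nadir decomposition together with the antichain condition arising from irreplaceability. This contrast with the unary side---where shortness alone suffices---highlights why the asymmetric setting of one binary and one unary counter is essential to the counting argument.
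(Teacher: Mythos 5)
There is a genuine gap at exactly the step you yourself flag as delicate: the claimed cap of $\abs{Q}(2\abs{Q}+1)$ on the antichain of pairs $(\bineff{\gamma}, \binguard{\gamma})$. Your justification --- that once $\len{\gamma_1}$ is fixed, the companion value $\bineff{\gamma_2} = \bineff{\gamma} - \binguard{\gamma}$ ``leaves only $2\abs{Q}+1$ choices'' --- does not hold for the binary counter: a path of length at most $\abs{Q}$ with binary-encoded updates can realise exponentially many distinct effects, so fixing $\len{\gamma_1}$ (and hence $\len{\gamma_2}$) does not confine $\bineff{\gamma_2}$ to any polynomial range; the $[-\abs{Q},\abs{Q}]$ bound you use is available only for the unary coordinate. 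The antichain condition by itself is also of no help, since antichains in $\Z^2$ under componentwise order can be arbitrarily large (e.g.\ $\set{(i,-i) : 1 \leq i \leq M}$), so ``antichain plus fixed lengths'' gives no bound at all.

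The missing idea is an exchange argument on binary-nadir decompositions, which is how the paper closes precisely this hole. Refine the pigeonhole so that a class fixes not just $(q, p, \len{\gamma}, \unaeff{\gamma}, \unaguard{\gamma})$ but the split data $\len{\gamma_1}$, $\len{\gamma_2}$, $\unaeff{\gamma_1}$, $\unaeff{\gamma_2}$ and $\unaguard{\gamma}$; all of these are bounded by shortness and unary encoding, and the two factors $(2\abs{Q}+1)$ in $R(\abs{Q})$ are really for $\unaeff{\gamma_1}$ and $\unaeff{\gamma_2}$, not for any binary quantity. Then one shows each refined class contains at most one irreplaceable characterisation, so no factor for the binary data is needed at all: if $\gamma = \gamma_1\gamma_2$ and $\gamma' = \gamma'_1\gamma'_2$ lie in the same class with $\bineff{\gamma} > \bineff{\gamma'}$ (whence $\binguard{\gamma} < \binguard{\gamma'}$, else $\gamma'$ would be replaceable), the hybrid $q$-cycle $\sigma = \gamma'_1\gamma_2$ satisfies $\gamma \prec \sigma$: it has strictly larger binary effect and strictly larger binary guard (using $\binguard{\gamma_2} = 0$ from Claim~\ref{clm:guard-at-nadir} and $\binguard{\gamma'} = \bineff{\gamma'_1}$), equal unary effect and length, and unary guard at least $\unaguard{\gamma}$ --- contradicting irreplaceability of $\gamma$. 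Your outer enumeration of $q$, $p$, $\len{\gamma}$, $\unaeff{\gamma}$, $\unaguard{\gamma}$ is fine and your arithmetic happens to reproduce $R(\abs{Q})$, but without this prefix--suffix swap the binary pair cannot be controlled, so the proof as proposed does not go through.
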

\begin{proof} 
We fix two states $q$ and $p$ and consider only $q$-cycles $\gamma$ with the binary-nadir state $p$. 
Thus in the final argument one must multiply everything by $\abs{Q}^2$. 
Since we consider short cycles, the unary effect and the unary guard are small, \ie $-\abs{Q} \leq \unaeff{\gamma} \leq \abs{Q}$ and $-\abs{Q} \leq \unaguard{\gamma} \leq 0$.

Towards a contradiction, suppose there exists more than $\abs{Q}^2(\abs{Q}+1)(2\abs{Q}+1)^2$ many such irreplaceable $q$-cycles with different characterisations.
By the pigeonhole principle there must exist two cycles, denoted in binary-nadir decomposition $\gamma = \gamma_1\gamma_2$ and $\gamma' = \gamma'_1\gamma'_2$, that have the same values $\unaeff{\gamma_1} = \unaeff{\gamma'_1}$, $\unaeff{\gamma_2} = \unaeff{\gamma'_2}$, $\unaguard{\gamma} = \unaguard{\gamma'}$, $\len{\gamma_1} = \len{\gamma'_1}$, and $\len{\gamma_2} = \len{\gamma'_2}$.

We know that the irreplaceable $q$-cycles $\gamma$ and $\gamma'$ have different characterisations, so it must be the case that their binary effects differ $\bineff{\gamma} \neq \bineff{\gamma'}$. 
Otherwise, the cycle with the lesser binary guard is replaceable, because the unary effect, unary guard, and length do not differ. 
Without loss of generality, suppose $\bineff{\gamma} > \bineff{\gamma'}$, then $\binguard{\gamma} < \binguard{\gamma'}$. 
Otherwise, $\gamma'$ would be replaceable as $\gamma \prec \gamma'$. 

Now consider the $q$-cycle $\sigma = \gamma_1'\gamma_2$, also with the binary-nadir state $p$.
We will show that $\gamma \prec \sigma$ contradicting the fact that $\gamma$ is an irreplaceable $q$-cycle.
First, observe that $\sigma$ has greater binary effect than $\gamma$ as
\begin{equation*}
    \bineff{\sigma} = \bineff{\gamma_1'} + \bineff{\gamma_2} > \bineff{\gamma_1} + \bineff{\gamma_2} = \bineff{\gamma},
\end{equation*}
where the inequality holds because $\binguard{\gamma} < \binguard{\gamma'}$. Second, $\sigma$ and $\gamma$ have equal unary effect because $\unaeff{\gamma_1'} = \unaeff{\gamma_1}$.
Third, we show that $\sigma$ has a greater binary guard than $\gamma$. 
Since $\gamma_2$ is the suffix of the binary-nadir decomposition of $\gamma$, it must be true that $\binguard{\gamma_2} = 0$.
By Claim~\ref{clm:guard-at-nadir} $\binguard{\sigma} = \binguard{\gamma_1'}$. 
Combining these facts, $\binguard{\sigma} = \binguard{\gamma'} > \binguard{\gamma}$. 
Fourth, $\sigma$ has at least the unary guard of $\gamma$ because, in particular, the unary guard of the prefix of a path is at most the unary guard of the entire path.
\begin{align*}
    \unaguard{\sigma} 
    & \; = \min\set{\unaguard{\gamma'_1}, \unaeff{\gamma'_1} + \unaguard{\gamma_2}} \\
    & \; \geq \min\set{\unaguard{\gamma'}, \unaeff{\gamma'_1} + \unaguard{\gamma_2}} \\
    & \; = \min\set{\unaguard{\gamma}, \unaeff{\gamma_1} + \unaguard{\gamma_2}}
    = \unaguard{\gamma}
    .
\end{align*}
Fifth and finally, $\sigma$ and $\gamma$ have equal length because $\len{\gamma'_1} = \len{\gamma_1}$.
We have at least one strict inequality. 
Thus, we have reached the desired contradiction.
\end{proof}

\section{Reshuffling Linear Form Paths} \label{sec:reshuffling}

\subsection{Reshuffling Procedure}
There can be many linear forms for a path $\pi$. 
We will try to find an `optimal' one, so we introduce a cost function to
quantify linear forms. 
Recall that a linear form $\ell$ is a sequence of paths $\tau_0, \tau_1, \ldots, \tau_k$ and a sequence of cycles $\gamma_1, \ldots, \gamma_k$. 
If $\pi$ is in the linear form $\ell = (\tau_0, \gamma_1, \tau_1, \dots, \tau_{k-1}, \gamma_k, \tau_k)$ then we write $\pi_\ell = \tau_0 \gamma_1^{e_1} \tau_1 \cdots \tau_{k-1} \gamma_k^{e_k} \tau_{k}$, where $\pi = \pi_\ell$ (the index is here to stress the exact linear form). 
For this section, we will consider linear forms only containing short cycles $\gamma$, they will play a key role in the following arguments.

We define a cost function that assigns, to a linear form $\ell$, the following pair of naturals $C(\ell) \coloneqq \left( \sum_{i=0}^k \len{\tau_i}, k \right)$.
For convenience, we define the polynomial $P(\abs{Q}) \coloneqq 2(\abs{Q}^2 + 1)(\abs{Q}^2 + 2) \cdot R(\abs{Q})$, where $R$ is the polynomial defined for Lemma~\ref{lem:number-of-irreplaceable-cycles}. 
We say that a linear form $\ell$ is \emph{narrow} if $C(\ell) \leq (\abs{Q}(P(\abs{Q})+1), P(\abs{Q}))$, otherwise we say that $\ell$ is \emph{wide}. 
We say that the triple $(\pi', \sigma, \pi'')$ is a monotone cycle decomposition of a path $\pi$ if $\sigma$ is a monotone cycle, $\pi = \pi' \sigma \pi''$, and $\len{\sigma} < \len{\pi}$.

\begin{lemma}[Reshuffling] \label{lem:reshuffling}
Let $\pi$ be a path such that $\run{\config{p}{u}}{\pi}{\config{q}{v}}$. 
Then there exists a path $\rho$ such that $\run{\config{p}{u}}{\rho}{\config{q}{w}}$ where $\vec{w} \geq \vec{v}$,
$\len{\rho} \leq \len{\pi}$, and either 
\begin{enumerate}[(i)]
    \item there exists a narrow linear form for $\rho$, or
    \item there exists a monotone cycle decomposition of $\rho$.
\end{enumerate}
\end{lemma}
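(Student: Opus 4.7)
The plan is to pick a cost-minimal linear form whose cycles are irreplaceable short cycles, and then either conclude directly (if narrow) or use Lemma~\ref{lem:number-of-irreplaceable-cycles} to find many copies of a common cycle and isolate a monotone cycle among them.

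Consider the family of all pairs $(\rho, \ell)$ such that $\run{\config{p}{u}}{\rho}{\config{q}{w}}$ for some $\vec{w} \geq \vec{v}$, $\len{\rho} \leq \len{\pi}$, and $\ell$ is a linear form for $\rho$ whose cycles are all irreplaceable short cycles. Repeated application of Lemma~\ref{lem:replacing} starting from $\pi$ guarantees the family is non-empty. I pick a pair minimizing $C(\ell)$ lexicographically, whence each connector $\tau_i$ must be a simple path: otherwise a simple (hence short) cycle could be extracted from $\tau_i$, strictly decreasing the first coordinate of $C(\ell)$ while increasing the second by one, contradicting minimality. Thus $\sum_{i=0}^k \len{\tau_i} \leq (k+1)|Q|$, so the dichotomy is driven by the width: if $k \leq P(|Q|)$ then $\ell$ is narrow and case~(i) holds.

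Assume instead $k > P(|Q|)$. Lemma~\ref{lem:number-of-irreplaceable-cycles} bounds the number of distinct (state, characterization) classes of irreplaceable short cycles by at most $|Q| \cdot R(|Q|)$. Since $P(|Q|)$ is large enough, pigeonhole yields a subsequence $\gamma_{i_1}, \ldots, \gamma_{i_m}$ of the cycles of $\ell$, with $m$ polynomial in $|Q|$, all $\bar q$-cycles of identical characterization; by Lemma~\ref{lem:replacing} I may assume they all equal a common cycle $\bar\gamma$. For each $j < m$, the subpath of $\rho$ between the $j$-th and $(j+1)$-th copy of $\bar\gamma$ is a $\bar q$-cycle $\sigma_j$. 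Since $\len{\bar\gamma} \leq |Q| < \len{\rho}$, if $\bar\gamma$ or any $\sigma_j$ is monotone then $\rho$ admits a monotone cycle decomposition and case~(ii) holds.

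Otherwise $\bar\gamma$ and every $\sigma_j$ are non-monotone, each having binary and unary effects of strictly opposite signs. Sorting the $\sigma_j$ into the two sign-types (positive-negative and negative-positive) and applying pigeonhole again yields a large subfamily of a common type. For two adjacent picks $\bar\gamma^{e_{i_j}}$ and $\bar\gamma^{e_{i_{j+1}}}$ sandwiching such a $\sigma_j$, I attempt to merge the two $\bar\gamma$-blocks by sliding $\sigma_j$ either left past $\bar\gamma^{e_{i_j}}$ or right past $\bar\gamma^{e_{i_{j+1}}}$. If any slide preserves non-negativity of both counters, the resulting linear form has strictly smaller width, contradicting the minimality of $C(\ell)$. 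The main technical obstacle is the two-sided failure: one direction fails because the binary counter goes negative during the displaced $\sigma_j$, and the other because the unary counter does. Exploiting that the unary counter has $\pm 1$ updates (and therefore bounded instantaneous value along the run) together with the many same-type $\sigma_j$'s supplied by the pigeonhole, the plan is to show that this two-sided obstruction forces the existence of a concatenation $\bar\gamma^a \sigma_j^b$, with small naturals $a,b$, whose effect is non-negative in both coordinates and whose length is below $\len{\rho}$, yielding a monotone cycle in $\rho$ as required for case~(ii).
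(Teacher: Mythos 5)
Your setup (cost-minimal linear form with simple connectors and irreplaceable short cycles, then pigeonhole via Lemma~\ref{lem:number-of-irreplaceable-cycles} to get many occurrences of a common cycle $\bar\gamma$) matches the first half of the paper's argument. The gap is in the final paragraph, which is where the actual work lies, and the route you sketch there does not go through. First, the claimed endpoint is false in general: if $\bar\gamma$ and $\sigma_j$ are non-monotone of the \emph{same} type, no non-trivial combination $a\,\eff{\bar\gamma}+b\,\eff{\sigma_j}$ is $\geq \vec{0}$; and for opposite types it can also fail (e.g.\ $\eff{\bar\gamma}=(1,-1)$, $\eff{\sigma_j}=(-2,1)$ admits no non-negative combination), and you ignore the $\leq\vec{0}$ branch of monotonicity. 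Second, even when suitable $a,b$ exist one typically needs $b\geq 2$ (e.g.\ $\eff{\bar\gamma}=(2,-3)$, $\eff{\sigma_j}=(-1,2)$ forces $b\geq 2$), and $\bar\gamma^a\sigma_j^b$ with $b\geq 2$ is \emph{not} a subpath of $\rho$, so it does not give a monotone cycle decomposition of $\rho$; building a new path around it would require re-establishing executability, coverability of $\vec{v}$, and $\len{\rho}\leq\len{\pi}$, none of which you address. Third, the parenthetical justification ``the unary counter has $\pm1$ updates (and therefore bounded instantaneous value along the run)'' is wrong: the unary counter can be exponentially large; only its rate of change is bounded.

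The paper's proof supplies exactly the ingredients your plan is missing. It first disposes of the case where more than $\abs{Q}^2+1$ configurations have unary value below $\abs{Q}$ (two of them share a state and a unary value, giving a zero-unary-effect, hence monotone, cycle), and then restricts attention to a long stretch in which \emph{every} configuration has unary value at least $\abs{Q}$. Inside that stretch it does not slide the whole intermediate segment at once, as you do, but moves single copies of the short cycle $\bar\gamma$ across it, one per step, which is safe precisely because $\unaguard{\bar\gamma}\geq-\abs{Q}$ and the unary values in the stretch are at least $\abs{Q}$ (the binary coordinate is handled by the monotone direction of the move). The procedure stops either when a whole block of $\bar\gamma$'s is merged away (strictly smaller width, contradicting cost-minimality) or when some configuration's unary value first drops below $\abs{Q}$; in the latter case the $\abs{Q}^2+2$ pairs each produce such a low-unary configuration, and a pigeonhole on (state, unary value) yields two equal ones, i.e.\ a zero-unary-effect monotone cycle. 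Your ``two-sided obstruction'' step would have to be replaced by an argument of this kind; as written it is a plan resting on a claim that is not true.
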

\begin{proof}
    We start with a series of preparations. 
    In the early part of this proof, we provide simple observations to ascertain some auspicious properties of our path.
    In the later part of this proof, we present the `reshuffling procedure' and conclude with one of the cases in the statement of this lemma.
    In this proof we will compare linear forms using the lexicographic order $\lex$, that is known to be a linear-order and a well-order.
    Formally,
    \begin{align*}
        C(\ell') \lex C(\ell) \iff  & (C(\ell'))_1 < (C(\ell))_1 \text{ or, } \\ 
                                    & (C(\ell'))_1 = (C(\ell))_1 \text{ and } (C(\ell'))_2 < (C(\ell))_2.
    \end{align*}
    
    We start with a path $\pi'$ such that $\run{\config{p}{u}}{\pi'}{\covfig{q}{v}}$ where $\vec{v}' \geq \vec{v}$, $\len{\pi'} \leq \len{\pi}$, and $\pi'$ has a linear form $\ell'$ that has the least cost among all linear forms for all like-paths.
    That means there does not exist another path $\pi''$ such that $\run{\config{p}{u}}{\pi''}{\configuration{q}{v}{''}}$ where $\vec{v}'' \geq \vec{v}$, $\len{\pi''} \leq \len{\pi}$, and $\pi''$ has a linear form $\ell''$ such that $C(\ell'') \lex C(\ell')$.

    For the first observation, suppose there exists $0 \leq i \leq k$ such that $\len{\tau_i} > \abs{Q}$. 
    Then the path $\tau_i$ can be written as $\tau_i = \tau' \gamma \tau''$, where $\gamma$ is a short cycle.
    We can define the linear form $\ell''$ by modifying $\ell'$ where $\tau_i$ is swapped for $\tau' \gamma \tau''$.
    Although this increments the number of cycles $k$, we decrease the total length of the paths as $\len{\tau'} + \len{\tau''} < \len{\tau_i}$ (recall that empty cycles are forbidden).
    Thus $C(\ell'') \lex C(\ell')$ contradicting the assumption that $\ell$ has minimum cost.
    Therefore, we assume that $\len{\tau_i} \leq \abs{Q}$ for all $0 \leq i \leq k$. 

    For the second observation, we define $U \coloneqq \set{0 \leq i \leq m: (\vec{v}_i)_2 < \abs{Q}}$ to be the set of indices of configurations in the run that have unary counter value less than $\abs{Q}$.
    Observe that if $\abs{U} > \abs{Q}^2 + 1$ then there are two indices $0 < i < j \leq m$ such that the two corresponding configurations in the run have matching states $q_{i} = q_{j}$ and equal unary counter values  $(\vec{v}_{i})_2 = (\vec{v}_{j})_2$.
    Then, regardless of sign of its binary effect, $\pi'[i..j]$ is a monotone cycle.
    Here, case (ii) immediately holds by decomposing $\pi'$ itself using the monotone cycle $\pi'[i .. j]$, given that $i > 0$ and $j \leq m$ implies $\len{\pi'[i .. j]} = j - i < m = \len{\pi'}$.
    Therefore, we assume $\abs{U} \leq \abs{Q}^2 + 1$. 
    We continue with the aim of satisfying the conditions of case (ii) by finding a monotone cycle decomposition.

    Let $d = \abs{\set{\gamma_1, \ldots, \gamma_k}}$ be the number of distinct cycles in the linear form $\ell'$.
    By Lemma~\ref{lem:replacing} and Lemma~\ref{lem:number-of-irreplaceable-cycles}, we can assume that $d \leq R(\abs{Q})$. 
    Otherwise, we can exchange replaceable $q$-cycles for irreplaceable $q$-cycles using the first point in Lemma~\ref{lem:replacing}.
    It is possible that for a particular characterisation, we can observe more than one irreplaceable $q$-cycle. 
    Then using the second point in Lemma~\ref{lem:replacing}, we can arbitrarily select one of these irreplaceable $q$-cycles with equal characterisations to exchange all others with.
    By applying these cycle replacements to $\pi'$, we obtain a different path $\rho$.
    Definition~\ref{def:replaceable-cycle} ensures that we do so without decreasing the effect (a), without allowing the counters to take a negative value (b), and without increasing the length of the path (c). 
    Therefore $\run{\config{p}{u}}{\rho}{\config{q}{w}}$ and $\vec{w} \geq \vec{v}' \geq \vec{v}$, and $\len{\rho} \leq \len{\pi'} \leq \len{\pi}$.
    We remark since cycles have been exchanged one-for-one, then $\rho$ takes a linear form $\ell$ with the same path segments as $\ell'$.
    Therefore, it is clear that neither the number of cycles $k$, nor the sum of the lengths of the paths between cycles, have changed.
    We also know that $\ell$ is a linear form for $\rho$ with minimum cost $C(\ell) = C(\ell')$, as per the initialisation in this proof.
    
    Suppose $\rho = \rho_\ell = \tau_0 \gamma_1^{e_1} \tau_1 \cdots \tau_{k-1} \gamma_k^{e_k} \tau_k$. 
    Let $(\configuration{q_j}{v}{_j})_{j=0}^m$ be the run obtained by following the path $\rho_\ell$ from the initial configuration $\configuration{q_0}{v}{_0} = \config{p}{u}$ to the final configuration $\configuration{q_m}{v}{_m} = \config{q}{w}$.
    We may assume that $\ell$ is wide.
    Otherwise, case (i) is immediately satisfied.
    We also know that $\len{\rho_\ell} \geq \max\set{(C(\ell))_1, (C(\ell))_2} > P(\abs{Q})$.
    We may also assume that each cycle $\gamma_1, \ldots, \gamma_k$ is non-monotone, \ie it is positive-negative or negative-positive.
    Otherwise, case (ii) immediately holds by decomposing $\rho$ itself using some monotone cycle $\gamma_i$, given that $\len{\gamma_i} \leq \abs{Q} < P(\abs{Q}) < \len{\rho_\ell}$.
    Notice this is valid since each $e_i > 0$ by the minimality of $C(\ell)$, otherwise you can write $\cdots\tau_{i-1} \gamma_i^0 \tau_i\cdots$ with one less cycle, decreasing $(C(\ell))_2$.
    
    From the first observation, we get $\sum_{i=0}^k \len{\tau_i} \leq (k+1) \abs{Q}$.
    Given that $\ell$ is wide, either $\abs{Q}(P(\abs{Q})+1) < (C(\ell'))_1 = \sum_{i=0}^k \len{\tau_i} \leq (k+1) \abs{Q}$ that implies $P(\abs{Q}) < k$, or $P(\abs{Q}) < (C(\ell'))_2 = k$.
    Regardless, $P(\abs{Q}) < k$ holds.
    Recall that $\abs{U} \leq \abs{Q}^2 + 1$ from the second observation.
    Since there are relatively `few' configurations indexed by $U$, there must exist a relatively `distant' pair of consecutive configurations indexed by $U$.
    More formally, there are $i$ and $j$ such that $0 \leq i < j \leq k$ and $j - i \ge 2(\abs{Q}^2+2)R(\abs{Q})$ and all configurations that occur in the run over the path segment $\tau_{i} \gamma_{i+1}^{e_{i+1}} \cdots \gamma_{j}^{e_j} \tau_{j}$ have unary counter value at least $\abs{Q}$.
    Notice that $j - i$ is the number of cycles in this path segment.
    Since $j - i \geq 2(\abs{Q}^2 + 2)R(\abs{Q})$ and by pigeonhole principle on the number of irreplaceable cycles, there is a common irreplaceable cycle $\gamma$ repeated at least $x = 2(\abs{Q}^2 + 2)$ many times. 
    We will focus on the first $x$ such occurrences of this cycle.
    Let $s_1, \ldots, s_x$ be the indices of this cycle $\gamma$, \ie $\gamma = \gamma_{s_1} = \ldots = \gamma_{s_x}$.
    To highlight these cycles, we decompose this path segment into
    \begin{equation*}
        \tau_{i} \gamma_{i+1}^{e_{i+1}} \cdots \gamma_{j}^{e_j} \tau_{j} = \Lambda_0 \gamma^{f_1} \Lambda_1 \cdots \Lambda_{x-1} \gamma^{f_x} \Lambda_x,
    \end{equation*}
    where $f_j \coloneqq e_{s_j}$ and $\Lambda_j$ are the concatenated paths (and cycles) in between iterations of $\gamma$, see Figure~\ref{fig:reshuffling0}.
    To reiterate, we know that all configurations that occur in the run over this path segment have at least $\abs{Q}$ unary counter value and $\gamma$ is a short cycle.
    \begin{figure}[ht!]
        \centering
        \includegraphics[width=\linewidth]{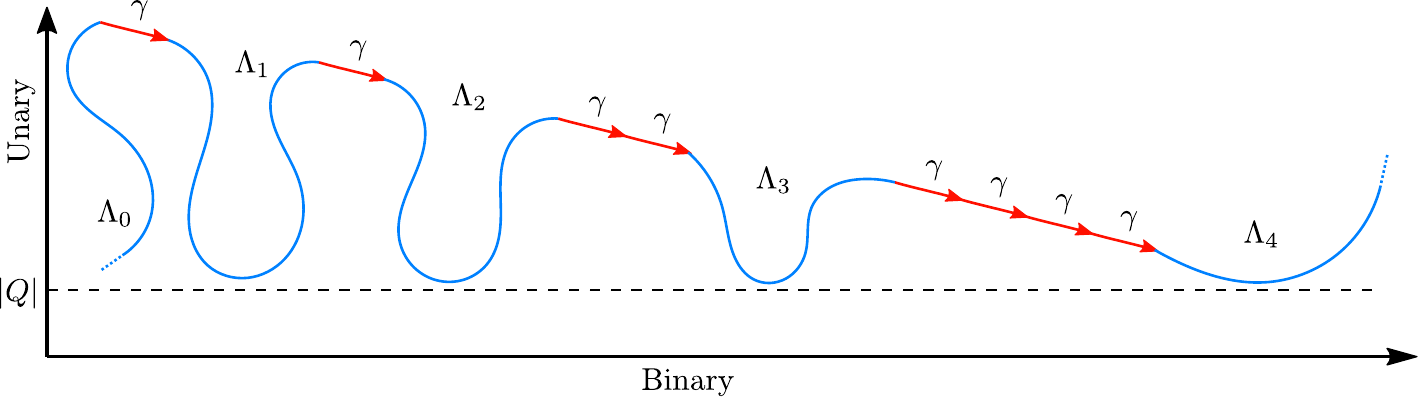}
        \caption{The decomposition of the path segment into $\Lambda_0 \gamma^{f_1} \Lambda_1 \cdots \Lambda_{x-1} \gamma^{f_x} \Lambda_x$, as above. 
        Notice that the unary counter is always at least $\abs{Q}$ as no configurations indexed by $U$ are present.}
        \label{fig:reshuffling0}
    \end{figure}

    \paragraph*{Reshuffling Procedure.}
    In the rest of the proof we will modify the path segment (above) of the path $\rho_\ell$ with a procedure that we call \emph{reshuffling}. 
    At the end of this procedure we will find a monotone cycle and satisfy case (ii) of this lemma.
    We either find this cycle directly, or we obtain a linear form $\ell''$ such that
    $C(\ell'') \lex C(\ell)$ contradicting the assumption that $\ell$ has
    minimal cost.
    
    Note that $x = 2(\abs{Q}^2 + 2)$ is even, and for every pair of consecutive cycles $\gamma_{2j-1}$ and $\gamma_{2j}$ (for $1 < 2j \leq x$), consider the subsegment $\gamma^{f_{2j-1}} \Lambda_{2j-1} \gamma^{f_{2j}}$. 
    There are two scenarios depending on the variant of the non-monotone cycle $\gamma$.
    In the scenario where $\gamma$ is positive-negative, we move an iteration of $\gamma$ from right to left obtaining $\gamma^{f_{2j-1}+1} \Lambda_{2j-1} \gamma^{f_{2j}-1}$. 
    In the scenario where $\gamma$ is negative-positive, we move an iteration of $\gamma$ in the opposite direction obtaining $\gamma^{f_{2j-1}-1} \Lambda_{2j-1} \gamma^{f_{2j}+1}$.

    We repeat this procedure until one of two conditions are met.
    The first is when there are no iterations of $\gamma$ on one side, so either $f_{2j-1}$ or $f_{2j}$ becomes $0$.
    The second is when there appears a configuration, in the run over the path subsegment after reshuffling, with unary counter value less than $\abs{Q}$.
    See Figure~\ref{fig:reshuffling1} for a pictorial presentation of reshuffling in the scenario where $\gamma$ is positive-negative.
    \begin{figure}[ht!]
        \vspace{3mm}
        \centering
        \includegraphics[width=\linewidth]{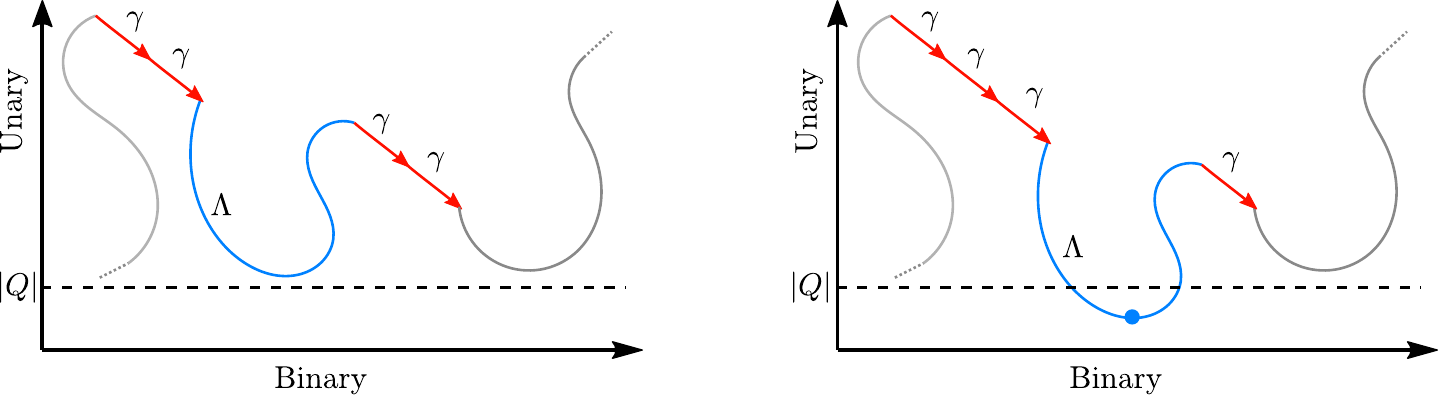}
        \vspace{1mm}
        \caption{Reshuffling around a path $\Lambda$ (blue) where $\gamma$ (red) is positive-negative.
        Before reshuffling, the path subsegment $\cdots \gamma \Lambda \gamma \cdots$ all configurations have unary counter value at least $\abs{Q}$ in the run (left).
        After reshuffling, the path subsegment $\cdots \gamma \gamma \Lambda \cdots$, there is a configuration with unary counter value less than $\abs{Q}$ in the run (right).
        }
        \label{fig:reshuffling1}
    \end{figure}

    We claim that after each reshuffling step, the corresponding run remains executable, so we must check that both counters remain non-negative.
    Notice that by only moving a cycle, the total effect of the path subsegment remains the same.
    Therefore, if the run was executable before reshuffling, we can safely assume that the prefix before the path subsegment and the suffix after the path subsegment are still executable.
    For that reason, consider the counter values of configurations occurring in the run over the reshuffled path subsegment.
    We focus on a single step of the reshuffling procedure that concerns the subsegment
    $\gamma^{f_{2j-1}} \Lambda_{2j-1} \gamma^{f_{2j}}$.

    Suppose $\gamma$ is a positive-negative cycle. 
    Then the reshuffling procedure moves $\gamma$ from right to left.  
    We claim that since $f_{2j-1} > 0$ and $\Lambda_0 \gamma^{f_1} \Lambda_1 \cdots \Lambda_{2j-1} \gamma^{f_{2j-1}}$ is executable, the subsegment $\Lambda_0 \gamma^{f_1} \Lambda_1 \cdots \Lambda_{2j-1} \gamma^{f_{2j-1}+1}$ is executable from the initial configuration.
    This is because one prerequisite of the reshuffling procedure is that all configurations occurring in the run over the path subsegment have at least $\abs{Q}$ unary counter value.
    Moreover, the cycle $\gamma$ has length at most $\abs{Q}$ so $\unaguard{\gamma} \geq -\abs{Q}$ means the unary counter value remains non-negative. 
    As for the binary counter value, since a single execution of $\gamma$ increases the binary counter and an iteration of $\gamma$ was already executed before reshuffling, $\Lambda_0 \gamma^{f_1} \Lambda_1 \cdots \Lambda_{2j-1} \gamma^{f_{2j-1}+1}$ is executable.  
    In the same way, from the initial configuration, $\Lambda_0 \gamma^{f_1} \Lambda_1 \cdots \Lambda_{2j-1} \gamma^{f_{2j-1}+1} \Lambda_{2j} \gamma_{2j}^{f_{2j}-1}$ is executable.
    This is because $\unaeff{\gamma} \geq -\abs{Q}$, and again, all configurations occurring in the run over the path subsegment have at least $\abs{Q}$ unary counter value, and also because of the monotonicity on the binary counter.

    The argument when $\gamma$ is a negative-positive cycle is analogous. 
    This concludes the correctness analysis of the reshuffling procedure.

    \paragraph*{Finishing Reshuffling.}
    We analyse what happens when reshuffling is finished.
    Suppose that there exists a pair $2j-1$ and $2j$ such that the reshuffling finishes under the first condition where all iterations of $\gamma$ have been moved to one side of $\Lambda_{2j-1}$.  
    In this case we obtain a new linear form $\ell''$ for $\rho$, where one collection of the cycle $\gamma$ has been removed (decrementing $k$).
    So $(C(\ell''))_2 = k - 1 < (C(\ell))_2$ and the two adjacent path segments can be combined without changing the summed length of paths so $(C(\ell''))_1 = (C(\ell))_1$.
    Therefore, $C(\ell'') \lex C(\ell)$ contradicting the assumption $\ell$ has the minimal cost.

    Otherwise, for every $1 \leq j \leq x/2$ the reshuffling of pair $2j-1$ and $2j$ finishes under condition the second condition.
    So there is a configuration with unary counter value less than $\abs{Q}$ in the run induced from the path $\rho$ for each pair $2j-1$ and $2j$ (see Figure~\ref{fig:reshuffling2}).
    Recall that $\frac{x}{2} = \abs{Q}^2 + 2$, that is the number of pairs.
    Akin to the first observation (in the beginning of this proof), we use the pigeonhole principle on the number of such configurations to obtain two configurations with matching states and equal unary counter values.  
    The path segment inducing the part of the run between these two configurations is a monotone cycle, regardless of the binary effect.
    Again, it must be true that the length of this cycle is less than the length of the whole path, so we obtain a monotone cycle decomposition of $\rho$.
    Thus case (ii) of the lemma holds.
    \begin{figure}[ht!] 
        \vspace{4mm}
        \centering
        \includegraphics[width=\linewidth]{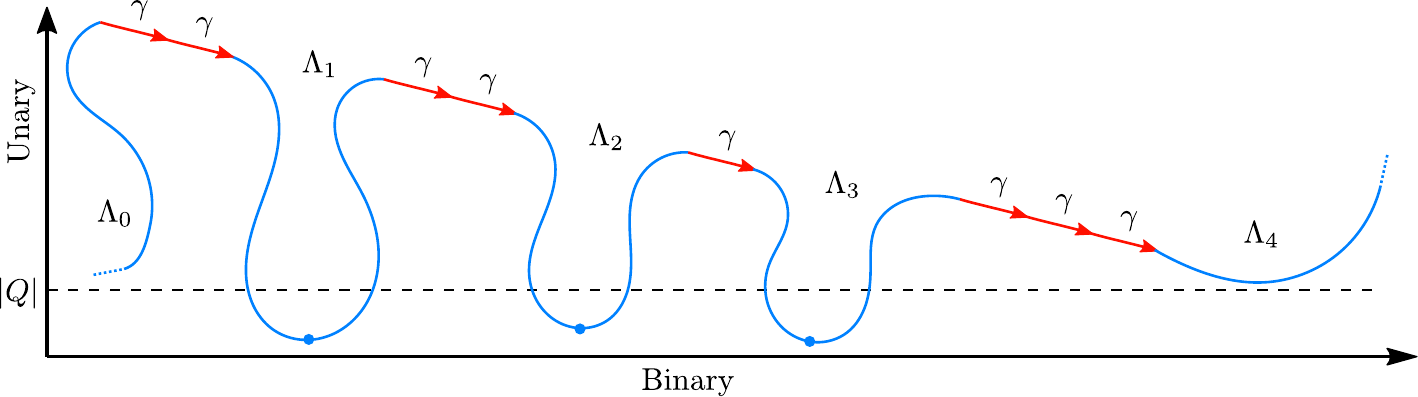}
        \caption{After reshuffling is finished under condition the second condition, we can find a zero unary effect cycle using the (sufficiently many) configurations with unary counter less $\abs{Q}$.}
        \label{fig:reshuffling2}
    \end{figure}
\end{proof}

\subsection{Applying Reshuffling} \label{subsec:reshuffling-application}
Lemma~\ref{lem:reshuffling} does not necessarily return a narrow linear form for a path $\pi$ witnessing coverability. 
Instead it may return a monotone cycle decomposition $(\rho, \sigma, \tau)$ of $\pi$.
Our next goal is to show that there exists polynomial size certificates for $\rho$ and $\sigma$ (Lemma~\ref{lem:decomposition}), and then to show that there exists a polynomial size certificate for $\tau$ (Lemma~\ref{lem:one-counter}).
Like linear forms, there can be many monotone cycle decompositions for a path. 
Following, we will use the cost function assigning monotone cycle decompositions to pairs of natural numbers $D((\rho, \sigma, \tau)) \coloneqq (\len{\rho\sigma}, \len{\sigma})$. 
Note that we can compare two decompositions using their cost, even if they are
for two different paths.

\begin{lemma} \label{lem:decomposition}
Suppose $\run{\config{p}{u}}{*}{\config{q}{v}}$ yet there is no narrow linear form $\ell$ for any path $\pi$ such that $\run{\config{p}{u}}{\pi}{\config{q}{w}}$ and $\vec{w} \geq \vec{v}$, then there exists a path $\pi'$ such that
\begin{enumerate}[(a)]
    \item $\run{\config{p}{u}}{\pi'}{\covfig{q}{w}}$ where $\vec{w}' \geq \vec{v}$,
    \item there is a monotone cycle decomposition $(\rho, \sigma, \tau)$ of $\pi'$ where $\eff{\sigma} > \vec{0}$, and
    \item there are narrow linear forms for both $\rho$ and $\sigma$.
\end{enumerate}
\end{lemma}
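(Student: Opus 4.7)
The plan is to apply Lemma~\ref{lem:reshuffling} repeatedly, at each step either producing the desired narrow linear form or a strictly smaller witness that contradicts a minimality assumption.

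First I would establish existence of a path satisfying (a) and (b). Start with a minimum-length coverability path $\pi$ with $\run{\config{p}{u}}{\pi}{\covfig{q}{w}}$ and $\vec{w} \geq \vec{v}$. Applying Lemma~\ref{lem:reshuffling} to $\pi$: case (i) would give a narrow linear form, contradicting the hypothesis, so case (ii) must apply, yielding $\pi^{*} = \rho_0 \sigma_0 \tau_0$ with a monotone cycle $\sigma_0$ and $\len{\pi^{*}} \leq \len{\pi}$. I claim $\eff{\sigma_0} > \vec{0}$: otherwise $\eff{\sigma_0} \leq \vec{0}$ by monotonicity, and deleting $\sigma_0$ from $\pi^{*}$ would give a strictly shorter coverability path (removing a non-positive-effect cycle preserves executability and can only improve subsequent counter values), contradicting minimality of $\pi$. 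Thus the set $\mathcal{T}$ of triples $(\pi', (\rho, \sigma, \tau))$ satisfying (a) and (b) is non-empty.

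To establish (c), I would take a triple in $\mathcal{T}$ minimising the lexicographic cost $D((\rho, \sigma, \tau)) = (\len{\rho\sigma}, \len{\sigma})$, and argue (passing to an equivalent minimiser if necessary) that both $\sigma$ and $\rho$ admit narrow linear forms. For $\sigma$, apply Lemma~\ref{lem:reshuffling} to $\sigma$ viewed as a cycle, i.e.\ a path from the configuration just after $\rho$ back to itself. The output $\sigma^{+}$ is again a cycle with $\eff{\sigma^{+}} \geq \eff{\sigma} > \vec{0}$ and $\len{\sigma^{+}} \leq \len{\sigma}$. Case (i) directly gives a narrow linear form for $\sigma^{+}$, which I substitute for $\sigma$ without worsening $D$. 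Case (ii) provides an inner monotone cycle $\sigma_\sigma$ inside $\sigma^{+}$: if $\eff{\sigma_\sigma} > \vec{0}$, promoting $\sigma_\sigma$ as the new outer cycle strictly reduces the second coordinate of $D$ without increasing the first (since $\len{\sigma_\sigma} < \len{\sigma^{+}} \leq \len{\sigma}$); if $\eff{\sigma_\sigma} \leq \vec{0}$, deleting $\sigma_\sigma$ from inside $\sigma^{+}$ leaves a strictly shorter positive-effect cycle $\sigma''$ with $\eff{\sigma''} \geq \eff{\sigma} > \vec{0}$, and $(\rho, \sigma'', \tau)$ strictly reduces $D$. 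Both sub-cases contradict minimality, so reshuffling must land in case (i).

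The argument for $\rho$ is analogous. I would reshuffle $\rho$ viewed as a path from $\config{p}{u}$ to the configuration just before $\sigma$; the output $\rho^{+}$ satisfies $\len{\rho^{+}} \leq \len{\rho}$ and reaches the same intermediate state with counters dominating those after $\rho$, so $\sigma\tau$ remains executable and still covers the target. Case (i) yields the desired narrow linear form. Case (ii) produces an inner monotone cycle $\sigma_\rho$ inside $\rho^{+}$: if $\eff{\sigma_\rho} > \vec{0}$, promoting $\sigma_\rho$ gives a triple whose first $D$-coordinate drops strictly, since $\len{\rho_\rho \sigma_\rho} \leq \len{\rho^{+}} \leq \len{\rho} < \len{\rho\sigma}$; if $\eff{\sigma_\rho} \leq \vec{0}$, deleting $\sigma_\rho$ strictly shortens $\rho$, again strictly reducing the first $D$-coordinate. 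Both contradict minimality. The main obstacle I anticipate is the meticulous effect-dominance bookkeeping ensuring that every replaced path still covers the original target, together with the care required to ensure each resulting triple is a legitimate monotone cycle decomposition (in particular, that the strict inequality $\len{\sigma} < \len{\pi'}$ continues to hold after each surgery, which may necessitate handling the degenerate cases in which one of $\rho, \tau$ becomes empty).
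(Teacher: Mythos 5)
Your proposal is correct and follows essentially the same route as the paper: start from a minimum-length witness to force $\eff{\sigma} > \vec{0}$ by cycle deletion, fix a lexicographically $D$-minimal monotone cycle decomposition, and apply Lemma~\ref{lem:reshuffling} to $\rho$ and $\sigma$, with the case~(ii) outcomes contradicting minimality and case~(i) supplying the narrow linear forms. Your only deviation --- minimising $D$ over all witnesses with positive-effect decompositions and explicitly promoting or deleting the inner monotone cycle of the reshuffled output, rather than arguing that $\rho$ and $\sigma$ themselves admit no monotone cycle decomposition --- is a bookkeeping variant of the paper's argument (if anything slightly more careful, since Lemma~\ref{lem:reshuffling}'s case~(ii) concerns its output path rather than its input).
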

\begin{proof}
    We will again use the lexicographical order $\lex$ to compare the cost of monotone cycle decompositions.
    Let $\pi$ be a path of minimum length such that $\run{\config{p}{u}}{\pi}{\config{q}{w}}$ where $\vec{w} \geq \vec{v}$.
    Let $c = (\rho, \sigma, \tau)$ be the monotone cycle decomposition of $\pi$ that minimizes the cost $D(c)$ under the $\lex$ order. 
    Such a decomposition must exist, otherwise applying Lemma~\ref{lem:reshuffling} would return a narrow linear form $\ell'$ for $\rho$ such that $\run{\config{p}{u}}{\rho}{\covfig{q}{w}}$ and $\vec{w}' \geq \vec{w} \geq \vec{v}$, contradicting an assumption of this lemma.
    Observe that $\eff{\sigma} > \vec{0}$, otherwise one can remove $\sigma$ and consider the shorter path $\rho \tau$, contradicting the minimal length of $\pi$.
    Next, we argue that $\rho$ and $\sigma$ do not have monotone cycle
    decompositions, we then leverage Lemma~\ref{lem:reshuffling} to obtain the narrow linear forms required.

    \paragraph*{Path $\rho$ cannot be decomposed further.}
    Towards a contradiction, assume that there is a monotone cycle decomposition $c' = (\rho', \sigma', \tau')$ of $\rho$. 
    Observe that the following monotone cycle decomposition $c' = (\rho', \sigma', \tau' \sigma \tau)$ of $\pi$ has lower cost $D(c') \lex D(c)$ as $(D(c'))_1 = \len{\rho'} + \len{\sigma'} < \len{\rho} + \len{\sigma} = (D(c))_1$.
    This contradicts the assumption that $(\rho, \sigma, \tau)$ has minimum cost.

    Suppose $\run{\config{p}{u}}{\rho}{\config{p'}{x}}$.
    Since there is no monotone cycle decomposition, applying Lemma~\ref{lem:reshuffling} to $\rho$ returns a path $\rho'$ with a narrow linear form such that $\run{\config{p}{u}}{\rho'}{\covfig{p'}{x}}$ where $\vec{x}' \geq \vec{x}$ and $\len{\rho'} \leq \len{\rho}$.

    \paragraph*{Cycle $\sigma$ cannot be decomposed further.}
    Towards a contradiction, assume that there is a monotone cycle decomposition $(\rho', \sigma', \tau')$ of $\sigma$.
    Observe that the following monotone cycle decomposition $c' = (\rho \rho', \sigma', \tau' \tau)$ of $\pi$ has lower cost $D(c') \lex D(c)$ as $(D(c'))_1 = \len{\rho} + \len{\rho'} + \len{\sigma'} \leq \len{\rho} + \len{\sigma} = (D(c))_1$ and $(D(c'))_2 = \len{\sigma'} < \len{\sigma} = (D(c))_2$.
    This contradicts the assumption that $(\rho, \sigma, \tau)$ has minimum cost. 

    Suppose $\run{\config{p'}{x}}{\sigma}{\config{p'}{y}}$.
    Since there is no monotone cycle decomposition, applying Lemma~\ref{lem:reshuffling} to $\sigma$ returns a path $\sigma'$ with a narrow linear form such that $\run{\config{p'}{x}}{\sigma'}{\covfig{p'}{y}}$ where $\vec{y}' \geq \vec{y}$ and $\len{\sigma'} \leq \len{\sigma}$.
    In particular, it is also true that $\eff{\sigma'} \geq \eff{\sigma} > \vec{0}$.

    Replacing $\rho'$ for $\rho$ and $\sigma'$ for $\sigma$ in $\pi$ yields a path $\pi'$.
    Clearly if $\run{\config{p}{u}}{\pi}{\config{q}{w}}$ where $\vec{w} \geq \vec{v}$, then $\run{\config{p}{u}}{\pi'}{\config{q}{w'}}$ where $\vec{w}' \geq \vec{w} \geq \vec{v}$.
    Finally, $(\rho', \sigma', \tau)$ is monotone cycle decomposition of $\pi'$ such that $\eff{\sigma'} > \vec{0}$ and $\rho'$ and $\sigma'$ have narrow linear forms, as required.
\end{proof}

We now aim to obtain a narrow linear form for $\tau$. 
Note that Lemma~\ref{lem:decomposition} gives us a monotone cycle $\sigma$ with positive effect on at least one counter, \ie $\eff{\sigma} > \vec{0}$. 
By pumping $\sigma$ we can force one of the counters to take an arbitrarily large value (following, the vector $\vec{x}$ reflects this large value for Lemma~\ref{lem:one-counter}). 
Then, loosely speaking, the problem reduces to coverability in 1-VASS. 
However, proving the existence of a polynomial size compressed linear form path in Theorem~\ref{thm:main-theorem} requires more care. 
Note that Lemma~\ref{lem:one-counter} is stated for 2-VASS (not necessarily with one unary counter).
First we need to recall the following bound on counter values observed throughout runs.
Recall that $\abs{V}_{\max} \coloneqq \abs{Q} + \abs{T} \cdot \abs{T}_{\max}$ is the pseudo-polynomial size of the input.

\begin{theorem}[Corollary from Theorem 3.2 in \cite{BlondinEFGHLMT21}] \label{thm:blondin}
    Consider a 2-VASS (with both counters in binary) $V = (Q,T)$ and let $\run{\config{p}{u}}{*}{\config{q}{v}}$, then there exists a run $\config{p}{u} = \configvanilla{q_0}{\vec{v}_0}$, $\configvanilla{q_1}{\vec{v}_{1}}, \ldots, \configvanilla{q_m}{\vec{v}_m} = \config{q}{v}$ such that $\abs{\vec{v}_0}_{\max}, \abs{\vec{v}_1}_{\max}, \ldots,$ $\abs{\vec{v}_m}_{\max} \leq (\abs{V}_{\max} + \abs{\vec{u}}_{\max} + \abs{\vec{v}}_{\max})^{\Oh(1)}$.
\end{theorem}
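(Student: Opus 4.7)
The statement is essentially a small-counters witness property for 2-VASS, and the plan is to derive it from the linear-path-scheme characterisation of runs in a 2-VASS. By \cite{BlondinEFGHLMT21,BlondinFGHM15}, any reachability run from $\config{p}{u}$ to $\config{q}{v}$ in a 2-VASS can be witnessed by a path of the form $\tau_0 \gamma_1^{e_1} \tau_1 \cdots \gamma_k^{e_k} \tau_k$ in which the width $k$, the individual lengths $\len{\tau_i}$ and $\len{\gamma_j}$, and consequently also the per-step counter updates, are polynomially bounded in $\abs{V}_{\max} + \abs{\vec{u}}_{\max} + \abs{\vec{v}}_{\max}$. The only potentially large quantities are the exponents $e_1,\ldots,e_k$, whose bitsizes are polynomial. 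My goal is to show that along any such witness, every intermediate configuration has counter values bounded polynomially.

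The argument splits into two parts. First, along each connector $\tau_i$, since $\len{\tau_i}$ and the per-transition effects are both polynomial, the counter values change by at most a polynomial amount. Thus if the configurations at the endpoints of each $\tau_i$ are polynomially bounded, then all configurations \emph{inside} $\tau_i$ are too. Second, and analogously, within one iteration of $\gamma_j$, the counters deviate from their value at the start of that iteration by at most $\len{\gamma_j}\cdot\abs{T}_{\max}$, which is polynomial; so if every value at the boundary between consecutive iterations of $\gamma_j$ is polynomially bounded, then all intermediate configurations inside the block $\gamma_j^{e_j}$ are too. It therefore suffices to control the counter values at the $O(k)$ block endpoints and at every boundary between successive iterations of the same cycle.

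The main obstacle is controlling these iteration boundaries when $e_j$ is exponentially large. The plan here is to normalise the witness so that no iteration block makes the counters unnecessarily large. Consider any block $\gamma_j^{e_j}$ where, at some intermediate boundary, a counter exceeds a polynomial threshold $B$ that will be chosen on the order of the polynomial bound given by the linear path scheme analysis. Because $\gamma_j$ is a short cycle with small effect, if the counter at the start of the block is at most $B$ then the intermediate value can exceed $B$ only after at least $B/(\len{\gamma_j}\cdot\abs{T}_{\max})$ iterations have net positive effect on that counter, and similarly the overall block must eventually bring the counter back to a value at most $B$ at the block's end. This means the block contains an inner segment of iterations whose net effect is $\vec{0}$ on both counters (using that we are in dimension $2$ and that the cycle is fixed): such a zero-effect segment can simply be deleted to produce a shorter witness with the same source and target. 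Iterating this pruning yields a witness in which every boundary between consecutive iterations of $\gamma_j$ stays polynomially bounded, which together with the previous observations gives the claimed polynomial bound on all configurations.

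The hardest step is the deletion argument used to remove excess iterations: it must simultaneously preserve reachability of $\config{q}{v}$ (\emph{not} merely coverability, since the statement is for exact runs) and preserve non-negativity of both counters throughout. For this one exploits that the cycle being iterated is fixed, so the effect of removing $r$ iterations is exactly $-r\cdot\eff{\gamma_j}$, a single lattice vector; combined with the fact that during the removed segment the counters stayed above $B$, an appropriate $r$ can be chosen so that the trimmed run remains executable and ends in the same final configuration. After this normalisation, all counter values along the witness are bounded by $B$ plus a polynomial, which is precisely $(\abs{V}_{\max} + \abs{\vec{u}}_{\max} + \abs{\vec{v}}_{\max})^{\Oh(1)}$.
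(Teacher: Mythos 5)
The paper does not prove this theorem: it is stated as a black-box citation of Theorem 3.2 in \cite{BlondinEFGHLMT21}, so there is no in-paper proof to compare against. I will therefore assess your sketch on its own terms.

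There is a genuine gap in the central deletion step. Inside a single block $\gamma_j^{e_j}$ every iteration is \emph{the same} cycle $\gamma_j$ with \emph{the same} effect vector $\eff{\gamma_j}$, so the counter values at consecutive iteration boundaries are $\vec{v}_{\text{start}} + i\cdot\eff{\gamma_j}$ for $i=0,\ldots,e_j$. Each coordinate is therefore linear, hence monotone, in $i$. This has two consequences that break your argument. First, the scenario you describe (both block endpoints at most $B$ but some intermediate boundary value exceeding $B$) simply cannot occur: by monotonicity each coordinate at an intermediate boundary lies between its value at the start and its value at the end of the block, so it is automatically at most $B$. There is nothing to delete. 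Second, a nonempty segment of $r\geq 1$ iterations of $\gamma_j$ has net effect $r\cdot\eff{\gamma_j}$, which is the zero vector only when $\eff{\gamma_j}=\vec{0}$; but in that case the counters never change across the block anyway, so again deletion is vacuous. The part of the argument that is actually nontrivial, and which the proposal does not address, is bounding the \emph{block endpoints} themselves. If $\eff{\gamma_j}$ has a strictly positive coordinate, an exponential exponent $e_j$ pushes that coordinate exponentially high at the end of the block; one then has to argue that the witness can be normalised (by trimming exponents or reorganising the scheme, possibly across several blocks) so that this never happens while preserving exact reachability of $\config{q}{v}$ and non-negativity throughout. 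That is precisely the content of Theorem 3.2 of \cite{BlondinEFGHLMT21}, and your sketch leaves it open.
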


In the following lemma, that is proved in Appendix~\ref{app:reshuffling}, given a 2-VASS $V$, the initial configuration $\config{p}{u}$, and target configuration $\config{q}{v}$, we write $B$ in place of $(\abs{V}_{\max} + \abs{\vec{u}}_{\max} + \abs{\vec{v}}_{\max})^{\Oh(1)}$ from Theorem~\ref{thm:blondin} and we fix $\vec{x} = (4B\abs{Q}^2\abs{V}_{\max}^2, 0)$. 
\begin{lemma}
    \label{lem:one-counter}
    Consider a 2-VASS (with both counters in binary) $V = (Q,T)$ and let $\run{\config{p}{u}}{*}{\config{q}{v}}$, then there exists a narrow linear form path $\pi'$ such that $\run{\config{p}{u + x}}{\pi'}{\covfig{q}{v}}$ for some $\vec{v}' \geq \vec{v}$.
\end{lemma}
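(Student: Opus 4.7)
The idea is to use the huge binary reserve $\vec{x}=(4B\abs{Q}^2\abs{V}_{\max}^2,0)$ to reduce the problem to one that is essentially 1-VASS coverability on the unary counter. Since the hypothesis $\run{\config{p}{u}}{*}{\config{q}{v}}$ is reachability, \Cref{thm:blondin} yields a run $\pi_0$ from $\config{p}{u}$ to $\config{q}{v}$ along which every configuration has both coordinates at most $B$. The same path $\pi_0$ executes from $\config{p}{u + x}$, and throughout this lifted execution the binary counter stays between $4B\abs{Q}^2\abs{V}_{\max}^2$ and $B+4B\abs{Q}^2\abs{V}_{\max}^2$, far above zero. The lifted run ends at $\config{q}{v + x}$ which covers $\config{q}{v}$, so a coverability witness is already in hand; the remaining task is to upgrade it to a narrow linear form.

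My plan is then to apply \Cref{lem:reshuffling} to this lifted path. If outcome (i) is returned we have the narrow linear form and are done. If outcome (ii) is returned, we receive a monotone cycle decomposition $(\rho_1,\sigma,\rho_2)$ of some path $\rho$ still witnessing coverability. I would inspect the provenance of $\sigma$ from inside the proof of \Cref{lem:reshuffling}: monotone cycles arising from its pigeonhole step on configurations with unary value smaller than $\abs{Q}$ necessarily satisfy $\eff{\sigma}_2=0$, while monotone cycles that are short linear-form cycles $\gamma_i$ have length at most $\abs{Q}$ and so $\abs{\eff{\sigma}}_{\max}\leq\abs{V}_{\max}$. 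In either sub-case, if $\eff{\sigma}\leq\vec{0}$, delete $\sigma$: the remainder $\rho_1\rho_2$ is executable from $\config{p}{u + x}$ by VASS monotonicity, and still covers $\config{q}{v}$ since only non-positive effect was discarded. If $\eff{\sigma}\geq\vec{0}$ but $\eff{\sigma}_2=0$, deletion costs at most $B$ on the binary counter, which the reserve absorbs without jeopardising coverage of $v_1$. Each excision strictly decreases the lexicographic cost $(\len{\rho},\len{\sigma})$, so the iteration terminates with outcome (i) producing the required narrow linear form.

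The main obstacle is the remaining case, where $\sigma$ is a monotone short linear-form cycle $\gamma_i$ with strictly positive unary effect: naive deletion would underflow the unary counter later in $\rho_2$. I would handle this by setting the exponent of $\gamma_i$ inside the linear form to $0$ rather than physically excising, so that $\gamma_i$ is retained in the scheme but contributes nothing to the run; this preserves coverage because the unary deficit introduced is at most $\abs{\eff{\gamma_i}_2}\cdot e_i\leq\abs{Q}\cdot e_i$ and can be compensated by retuning the exponents of other non-monotone cycles of the linear form, which is precisely the setting where \Cref{lem:replacing} and \Cref{lem:number-of-irreplaceable-cycles} apply to bound the width by $P(\abs{Q})$. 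Throughout, the huge binary reserve ensures that the final linear form is executable from $\config{p}{u + x}$: the total negative binary drift along any narrow linear form is at most $P(\abs{Q})\cdot\abs{V}_{\max}$, which is comfortably absorbed by $4B\abs{Q}^2\abs{V}_{\max}^2$, giving a path $\pi'$ as required.
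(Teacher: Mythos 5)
There is a genuine gap, and it is structural: \Cref{lem:one-counter} is deliberately stated for a 2-VASS with \emph{both} counters in binary (the paper needs this generality because in the proof of \Cref{thm:main-theorem} the coordinates may be flipped before the lemma is invoked), whereas your entire plan routes through \Cref{lem:reshuffling}. That lemma, together with \Cref{lem:replacing} and \Cref{lem:number-of-irreplaceable-cycles} on which it rests, is proved only for 2-VASS with one unary counter: the polynomial bound on irreplaceable short cycles uses that a short cycle has unary effect and unary guard in $[-\abs{Q},\abs{Q}]$, and the reshuffling argument pivots on configurations whose \emph{unary} counter value is below $\abs{Q}$. None of this survives when the second counter receives binary updates, so the dichotomy ``narrow linear form or monotone cycle decomposition'' is simply not available in the setting of this lemma. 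Your later steps inherit the problem: the ``retuning the exponents of other non-monotone cycles'' step again appeals to \Cref{lem:replacing} and \Cref{lem:number-of-irreplaceable-cycles} outside their hypotheses, and the closing estimate that ``the total negative binary drift along any narrow linear form is at most $P(\abs{Q})\cdot\abs{V}_{\max}$'' is false as stated, since narrowness bounds the widths and the lengths of the paths and cycles but not the exponents, which can be exponentially large; once you have deleted cycles and retuned exponents, the bound $B$ from \Cref{thm:blondin} no longer controls the intermediate values, so executability from $\config{p}{u+x}$ is not justified.

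The paper's proof avoids all of this machinery and is much more elementary. Starting from the $B$-bounded run of \Cref{thm:blondin}, call a cycle semi-positive if its effect on the second (non-boosted) coordinate is positive. If the run admits no semi-positive cycle, contract it to a simple path by deleting cycles from right to left; only non-semi-positive cycles are deleted, so the second coordinate only improves, and the loss on the first coordinate (at most about $B+\abs{Q}\cdot\abs{V}_{\max}$) is absorbed by the reserve $\vec{x}$, giving a narrow linear form consisting of a single short path. Otherwise, take the shortest prefix $\rho\gamma$ ending in a simple semi-positive cycle $\gamma$, contract $\rho$ to a simple path $\rho'$, pump $\gamma$ a fixed pseudo-polynomial number of times $e=B\cdot\abs{Q}\cdot\abs{V}_{\max}$ to make the second counter large, and finish with a simple path $\tau$ to $q$; the reserve $\vec{x}$ on the first coordinate absorbs the guards of $\rho'$, $\gamma^{e}$ and $\tau$. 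The witness $\rho'\gamma^{e}\tau$ is narrow by construction. If you want to salvage your approach you would need an argument that works with both counters binary, which is exactly what this semi-positive-cycle pumping provides.
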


\section{Proof of Theorem~\ref{thm:main-theorem}} \label{subsec:main-proof}

Before proving Theorem~\ref{thm:main-theorem}, we employ the fact that for a general 2-VASS, not necessarily with one unary counter, the exponents of cycles in linear forms can be pseudo-polynomially bounded.

\begin{lemma}[Corollary from Lemma 18 in~\cite{BlondinFGHM15}] \label{lem:poly-exponents}
    Let $\pi$ be path in a 2-VASS with a linear form $\pi = \tau_0 \gamma_1^{f_1}
    \tau_1 \ldots \gamma_k^{f_k} \tau_k$ such that $\run{\config{p}{u}}{\pi}{\config{q}{v}}$. 
    Then there exist a path $\pi' = \tau_0 \gamma_1^{e_1} \tau_1 \cdots \tau_{k-1} \gamma_k^{e_k} \tau_k$ such that $\run{\config{p}{u}}{\pi'}{\covfig{q}{v}}$ where $\vec{v}' \geq \vec{v}$ and $\bit{e_1}, \ldots, \bit{e_k}$ are all bounded by a polynomial in $\abs{V} + \bit{\vec{u}} + \bit{\vec{v}}$.
\end{lemma}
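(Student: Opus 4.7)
The plan is to recognise finding valid exponents $e_1, \ldots, e_k$ as a bounded-dimensional integer linear programming instance and then to invoke Lemma~18 of~\cite{BlondinFGHM15}. First I would fix the structural data of the given linear form, namely the connecting paths $\tau_0, \tau_1, \ldots, \tau_k$ and the cycles $\gamma_1, \ldots, \gamma_k$, and regard only the exponents as unknowns. For every position along the run induced by $\tau_0 \gamma_1^{e_1} \tau_1 \cdots \gamma_k^{e_k} \tau_k$, each counter value is then an affine function of $e_1, \ldots, e_k$ whose coefficients are sums of effects of fixed sub-paths, and hence have bitsize polynomial in $\abs{V} + \bit{\vec{u}} + \bit{\vec{v}}$.

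Second, I would encode the two coverability requirements as linear inequalities in these exponents. The endpoint condition that the final counter values dominate $\vec{v}$ amounts to two linear inequalities. The validity of the run---both counters remaining non-negative throughout---is reduced to $O(k)$ linear inequalities by the standard checkpointing argument already invoked in the proof of the coverability corollary above (and formalised as Lemma~14 in Section~V of~\cite{BlondinFGHM15}): within a block $\gamma_i^{e_i}$, each counter value at the start of iteration $j$ is an affine function of $j$, so its minimum over the block is attained either just before the block or just after its last iteration, or, when the effect of $\gamma_i$ on that coordinate is zero, uniformly across any single iteration, where it is already bounded by $\unaguard{\gamma_i}$ and $\binguard{\gamma_i}$ which depend only on the fixed cycle.

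Third, the resulting system is a feasible integer program in $k$ variables with polynomially many linear inequalities whose coefficients have polynomial bitsize in $\abs{V} + \bit{\vec{u}} + \bit{\vec{v}}$. Lemma~18 of~\cite{BlondinFGHM15} furnishes the desired polynomial bitsize bound on some feasible solution in the reachability setting, and the relaxation from equality to $\geq$ at the endpoint only enlarges the feasible region, so a polynomially-bounded solution continues to exist. The main (and essentially only) subtlety is ensuring that the intermediate non-negativity encoding is correct for every cycle block---but this is precisely the classical observation recalled above, so no new ideas beyond~\cite{BlondinFGHM15} are required, and the statement follows by taking $(e_1, \ldots, e_k)$ to be such a solution.
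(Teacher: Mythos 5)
The paper offers no proof of this lemma at all—it is cited directly as a corollary of Lemma~18 of~\cite{BlondinFGHM15} (with the checkpointing reduction to integer programming already alluded to, via Lemma~14 of that paper, in the proof of the \class{NP} corollary)—and your derivation spells out exactly that intended argument: fix the paths and cycles, express validity and the coverability target as linear inequalities in the exponents using the first/last-iteration checkpoints plus the fixed guards, and apply the small-solution bound. Your proposal is correct and coincides with the paper's (implicit) justification; the only point worth noting is that the polynomial bound on the number of inequalities, and hence on the bitsize of the solution, tacitly uses that the width $k$ is polynomially bounded, which is how the lemma is applied (to narrow linear forms) in the paper.
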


\begin{proof}[Proof of Theorem~\ref{thm:main-theorem}]    
    Let $\run{\config{p}{u}}{\pi}{\config{q}{v}}$ for some path $\pi$. 
    If there is a narrow linear form $\ell$ for $\pi$ then by Lemma~\ref{lem:poly-exponents} we obtain $\pi' = \tau_0 \gamma_1^{e_1} \tau_1 \cdots \tau_{k-1} \gamma_k^{e_k} \tau_k$ such that $\run{\config{p}{u}}{\pi'}{\covfig{q}{v}}$ where $\vec{v}' \geq \vec{v}$ and $\bit{e_1}, \ldots, \bit{e_k}$ are all bounded above by a polynomial in $\abs{V} + \bit{\vec{u}} + \bit{\vec{v}}$.
    Since $\ell$ is a narrow linear form, we know that $k \leq P(\abs{Q})$ so $\sum_{i = 1}^k \len{\gamma_i} \leq k\abs{Q} \leq \abs{Q}P(\abs{Q})$ and we also know that $\sum_{i=0}^{k} \len{\tau_i} \leq \abs{Q}(P(\abs{Q})+1)$.
    Together, this implies the linear form path $\pi'$ is of polynomial size.

    It remains to consider the case when there is no narrow linear form $\ell$ for $\pi$. By Lemma~\ref{lem:decomposition} (via Lemma~\ref{lem:reshuffling}) there exists a path $\pi'$ such that $\run{\config{p}{u}}{\pi'}{\covfig{q}{v}}$ and $\vec{v}' \geq \vec{v}$.
    Moreover, there is a monotone cycle decomposition $(\rho, \sigma, \tau)$ of $\pi'$ such that $\eff{\sigma} > \vec{0}$ and there are narrow linear forms for both $\rho$ and $\sigma$.

    Assume that $(\eff{\sigma})_1 > 0$. 
    This is without loss of generality because if $(\eff{\sigma})_1 = 0$ then one can flip the coordinates in $V$, $\vec{u}$ and $\vec{v}$ (for the remainder of the proof it will not matter that one counter is unary).
    Let $\config{p'}{m}$ be the configuration such that $\run{\config{p}{u}}{\;\rho\;}{\run{\config{p'}{m}}{\sigma\tau}{\covfig{q}{v}}}$.
    Observe that since $\eff{\sigma} > \vec{0}$ for every $i \in \naturals$ the path $\rho\sigma^i$ induces the run $\run{\config{p}{u}}{\rho\sigma^i}{\configuration{p'}{m}{+ i\cdot \eff{\sigma}}}$.  
    Consider $x = (\vec{x})_1 = 4B\abs{Q}^2\abs{V}_{\max}^2$ (for Lemma~\ref{lem:one-counter}), clearly $x$ is large enough so that $\run{\config{p}{u}}{\rho\sigma^x}{\covfig{p'}{m}}$ and $\vec{m}' \geq \vec{m} + \vec{x}$.
    By Lemma~\ref{lem:one-counter} there exists a narrow linear form for a path $\tau'$ such that $\run{\covfig{p'}{m}}{\tau'}{\configuration{q}{v}{''}}$ and $\vec{v}'' \geq \vec{v}'$.

    We conclude by considering the compressed linear form path $\rho \sigma^x \tau'$ such that $\run{\config{p}{u}}{\rho \sigma^x \tau'}{\configuration{q}{v}{''}}$ and $\vec{v}'' \geq \vec{v}' \geq \vec{v}$.
    Since $\rho$, $\sigma$, and $\tau'$ have narrow linear forms, we can also bound the exponents using Lemma~\ref{lem:poly-exponents} as in the beginning of this proof.
    Finally, $\bit{x}$ is polynomial in $\abs{V} + \bit{\vec{u}} + \bit{\vec{v}}$ much like the exponents of the cycles in the linear forms.
    Therefore, the size of the compressed linear form $\rho \sigma^x \tau'$ is polynomial in $\abs{V} + \bit{\vec{u}} + \bit{\vec{v}}$.
\end{proof}

\section{Conclusion and Future Work} \label{sec:conclusion}

In this paper we proved that coverability in 2-VASS with one unary counter is in \class{NP}, a drop in complexity from \class{PSPACE} for general 2-VASS.
We achieve this by using our new techniques.
Most notably, we polynomially bounded the number of short cycles that need to be used (Section~\ref{sec:replacing}).
Then, we attempt to find a polynomial linear form path by replacing short cycles and reshuffling the path (Section~\ref{sec:reshuffling}).

A natural extension is to consider whether coverability in 3-VASS with one binary counter and two unary counters is also in \class{NP}.
The technique for polynomially bounding the number of short cycles that need be used can easily be generalised to higher dimension VASS with one binary counter and multiple unary counters.
However, it is not clear how to modify and use our reshuffling technique.
Another open problem is whether reachability in 2-VASS with one unary counter is also in \class{NP}.
Note that completeness would immediately follow from the fact that reachability in binary encoded 1-VASS is \class{NP}-hard~\cite{HaaseKOW09}.

\bibliographystyle{plainurl}
\bibliography{references}

\begin{appendix}
    \section{Proofs of Section~\ref{sec:coverability}} \label{app:coverability}

\begin{claim} \label{clm:no-poly-linear-form-path-example}
    Given the instance of coverability consisting of $V$ from Figure~\ref{fig:example-main}, and the two configurations $q(0,1)$ and $q(N,1)$, there does not exist a path $\pi$ in linear form of polynomial size such that $\run{q(0,1)}{\pi}{q(\vec{v}')}$ where $\vec{v}' \geq (N,1)$.
\end{claim}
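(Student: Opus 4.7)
The plan is to argue by contradiction. Suppose that, for some coverability-witnessing path $\pi$, there exists a linear form $\ell = (\tau_0, \gamma_1, \tau_1, \ldots, \gamma_k, \tau_k)$ of size polynomial in $n = \log N$, so that $k$, each $\len{\tau_i}$, and each $\len{\gamma_j}$ are polynomially bounded in $n$.

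The first step is to show that every iterated cycle $\gamma_j$ must be (a power of) one of the self-loops $\alpha, \beta, \gamma, \delta$. Any cycle of $V$ that visits $q$ must use $t_{bq}$ or $t_{dq}$ at least once, each contributing $-N^6 + O(N)$ to the binary counter; meanwhile, the only transitions of $V$ with positive binary effect are $t_{qa}, t_{pc}, \beta, \delta$, each contributing $+N^4$. Hence any polynomial-length $q$-cycle has binary effect at most $\mathrm{poly}(n) \cdot N^4 - N^6 \leq -N^6/2$ for large $N$. Combining this with the fact that the binary counter throughout any polynomial-linear-form run stays in $O(N^6)$---the positive contributions come only from the four transitions above, with $t_{qa}, t_{pc}$ used polynomially many times in the $\tau_i$'s and iterated $\beta, \delta$ exponents capped by unary supplies, themselves capped by iterated $\alpha, \gamma$ exponents ($\leq N^2$ by the binary budgets at $a, c$)---any iterated $q$-cycle admits only $O(1)$ repetitions and may be unrolled into the $\tau_i$'s. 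The only polynomial-length cycles of $V$ that avoid $q$ are the self-loops at $a, b, c, d$; by the disjointness requirement on iterated cycles, there is at most one of each in the linear form.

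Now set $L := T_{qa}$ and $R := T_{qp}$ (total uses of each transition in $\pi$); flow conservation gives $L = T_{ab} = T_{bq}$ and $R = T_{pc} = T_{cd} = T_{dq}$, and both are polynomial. The initial binary being $0$ forces the first transition to be $t_{qa}$, so $L \geq 1$. Let $A, B, C, D$ denote the non-iterated uses of $\alpha, \beta, \gamma, \delta$ in the $\tau_i$'s and $e_\alpha, e_\beta, e_\gamma, e_\delta$ the corresponding iteration exponents. The binary budget at $a$ forces $A + e_\alpha \leq L N^2$; the unary supply at $b$ forces $B + e_\beta \leq A + e_\alpha$; executability of $t_{bq}$ forces $B + e_\beta \geq L N^2 - O(1)$; analogous bounds hold on the $\rho$-side. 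A direct computation of $\bineff{\pi}$ then yields
\begin{equation*}
\bineff{\pi} = (L + R + B + D + e_\beta + e_\delta) N^4 - (L + R) N^6 - (A + C + e_\alpha + e_\gamma) N^2 + (L - R) N + R,
\end{equation*}
into which the near-equalities and the requirement $\bineff{\pi} \geq N$ can be substituted. The resulting inequality forces $L + R \leq 2$; moreover, noting that after one $\lambda$-simulation the counters stand at $q(N, 0)$ (blocking a second $t_{qa}$ since its unary prerequisite fails) and after one additional $\rho$-simulation they stand at $q(1, 1)$ (blocking further $t_{qp}$ since its binary prerequisite fails), we obtain $L \leq 1$ and $R \leq 1$.

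Only the cases $(L, R) \in \{(1, 0), (1, 1)\}$ remain. For $(1, 0)$ the final configuration is $q(N, 0)$, failing the unary coverability requirement. For $(1, 1)$ the tight exponent values $A + e_\alpha = B + e_\beta = C + e_\gamma = D + e_\delta = N^2$ give $\bineff{\pi} = 1$ and final configuration $q(1, 1)$, failing the binary requirement for $N > 1$. Extending $\tau_k$ cannot help: from $q(N, 0)$ the transition $t_{qa}$ is blocked unary-wise, and from $q(1, 1)$ any return to $q$ via $t_{bq}$ or $t_{dq}$ demands binary $\geq N^6 - 1$, which no polynomial-length continuation can supply without another $\beta^{\Omega(N^2)}$ or $\delta^{\Omega(N^2)}$. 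This contradiction completes the proof. The main obstacle in making this plan rigorous is the unary-budget accounting: one must carefully verify that redistributing $\beta$- and $\delta$-uses between the iterated blocks and the $\tau_i$'s cannot inflate the total $\beta$- or $\delta$-counts past $N^2$ on either side.
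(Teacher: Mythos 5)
Your overall intuition matches the paper's---any covering run must simulate $(\lambda\rho)^N$, each passage through the left or right loop needs about $N^2$ iterations of the corresponding self-loops, and a polynomial-size linear form cannot supply exponentially many separate blocks---but two steps in your write-up do not hold as stated. First, ``by the disjointness requirement on iterated cycles, there is at most one of each in the linear form'' assumes that the cycles of a linear form are pairwise disjoint. In the paper that restriction is part of the definition of a linear path \emph{scheme}, not of a linear \emph{form}: Section~\ref{sec:reshuffling} explicitly manipulates linear forms in which the same cycle occurs many times ($\gamma=\gamma_{s_1}=\cdots=\gamma_{s_x}$). Under that notion a polynomial-size linear form may contain polynomially many blocks $\alpha^{e},\beta^{e},\gamma^{e},\delta^{e}$, so $L$ and $R$ need not be at most $1$, and your terminal case analysis $(L,R)\in\{(1,0),(1,1)\}$ is not the right dichotomy. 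What actually closes the argument is a counting step you never make: granting the forced per-passage behaviour, a left passage has net effect $(N,-1)$ and a right passage $(-N+1,1)$, so unary conservation gives $R\ge L$, and then the binary requirement $(L-R)N+R\ge N$ forces $R\ge N$; since each passage needs its own contiguous block of $\approx N^2$ iterations of $\alpha$ (resp.\ $\gamma$) that cannot be absorbed into the polynomially long $\tau_i$'s, the width of any linear form is exponential. That is essentially the paper's proof (``$4N$ blocks of length $N^2$'').

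Second, the quantitative steps along the way are not justified as written. The claim that substituting the near-equalities into your expression for $\bineff{\pi}$ ``forces $L+R\le 2$'' does not follow: that substitution only shows the budgets are nearly tight (total $\alpha/\gamma$ iterations $\approx (L+R)N^2$), which is perfectly consistent with large $L+R$. Moreover the budgets themselves---$A+e_\alpha\le LN^2$, $B+e_\beta\ge LN^2-O(1)$, the binary counter staying $O(N^6)$, and hence $O(1)$ repetitions of any polynomial-length $q$-cycle---all presuppose that the counter values at every visit to $q$ remain small; a priori, binary left over from one passage could inflate the $\alpha$-budget of the next, and your justification of the $O(N^6)$ bound already invokes the cap ``$\le N^2$ by the binary budgets at $a,c$'' that it is supposed to establish. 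Controlling the configurations at $q$ is exactly the inductive forced-run analysis the paper performs (from $q(0,1)$ the run is forced through $q(N,0)$, then $q(1,1)$, and so on), and it is the step you yourself flag as ``the main obstacle''. As it stands, then, the proposal has genuine gaps rather than being a complete alternative proof; filled in correctly, it collapses onto the paper's argument.
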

\begin{proof}
First, let us consider the scenario in a run from the configuration $q(0,1)$, this is the initial configuration.
The only option is to take the first transition $t_{qa}$ of $\lambda$, so let us consider a single iteration of this cycle.
So that the binary counter does not take a negative value, just before $t_{bq}$ the last transition in $\lambda$, its value must be at least $N^6-N$.
This only be satisfied when both $\alpha$ and $\beta$ are exhausted, so to speak.
In other words, $\alpha$ is iterated (exactly $N^2$ many times) until the value of the binary counter is less than $N^2$.
Then $\beta$ is iterated (also exactly $N^2$ many times) until the value on the unary counter is equal to $0$.
Therefore, an iteration of the cycle $\lambda = t_{qa}\alpha^{N^2}t_{ab}\beta^{N^2}t_{bq}$ is taken.
This has effect $(N, -1)$, so the configuration $q(N, 0)$ is observed.

Second, let us consider the scenario in a run from the configuration $q(N,0)$.
This occurs after following an iteration of $\lambda$, as detailed above.
The only option is to take the first transition $t_{qp}$ of $\rho$, so let us consider a single iteration of this cycle.
Like before, so that the binary counter does not take a negative value, just before $t_{dq}$ the last transition in $\rho$, its value must be at least $N^6$.
This is only satisfied when both $\gamma$ and $\delta$ are exhausted, much like for $\lambda$ above.
In other words, $\gamma$ is iterated (exactly $N^2$ many times) until the value of the binary counter is less than $N^2$.
Then $\delta$ is iterated (also exactly $N^2$ many times) until the value on the unary counter is equal to $0$.
Therefore, an iteration of the cycle $\rho = t_{qp}t_{pc}\gamma^{N^2}t_{cd}\delta^{N^2}t_{dq}$ is taken.
This has effect $(-N+1, 1)$, so the configuration $q(1,1)$ is observed.

We again are in the same scenario where the first transition of $\lambda$ can be iterated.
We repeat the above arguments, as a result alternating the cycles $\lambda$ and $\rho$ with the net effect $(1,0)$.
After $N$ iterations of the pair of cycles $\lambda\rho$, the configuration $q(N,1)$ is finally witnessed.

The number of iterations of each cycle $\alpha$ and $\beta$ within $\lambda$ is equal to $N^2$. 
The same is true for the number of iterations of each cycle $\gamma$ and $\delta$ within $\lambda$.
Additionally, in order to witness a configuration guaranteeing coverability of the target configuration $q(N,1)$, at least $N$ iterations of the pair $\lambda\rho$ are required.
Given that there are $4N$ such cycles in this run, it is not possible to construct a path in linear form of polynomial size.
\end{proof}

\section{Proofs of Section~\ref{sec:replacing}} \label{app:replacing}

\begin{proof}[Proof of Claim~\ref{clm:guard-at-nadir}]
    Suppose $\binguard{\pi[i_b+1 .. k]} < 0$.
    Then there exists $i_b+1 \leq j \leq k$ such that $\binguard{\pi[i_b+1 .. k]} = \sum_{i=i_b+1}^j b_i < 0$.
    Now consider the binary guard of the entire path $\pi$.
    \begin{equation*}
        \binguard{\pi} = \sum_{i=1}^{b_i} b_i > \sum_{i=1}^{b_i} b_i +
        \binguard{\pi[i_b+1 .. k]} = \sum_{i=1}^{j} b_i
    \end{equation*}

    The prefix path $\pi[1 .. j]$ has smaller effect than $\pi[1 .. i_b]$.
    This contradicts the definition of the binary guard of $\pi$, so
    $\binguard{\pi[i_b+1 .. k]} \geq 0$.
    We know that $\binguard{\pi[i_b+1 .. k]} \leq 0$, so we conclude
    $\binguard{\pi[i_b+1 .. k]} = 0$.
    The same argument can be used to show $\unaguard{\pi[i_u+1 .. k]} = 0$.
\end{proof}

\begin{proof}[Proof of Lemma~\ref{lem:replacing}]
    In either case, we refer to the Definition~\ref{def:replaceable-cycle}.
    By (a), we have $\bineff{\gamma'} \geq \bineff{\gamma}$ and $\unaeff{\gamma'} \geq \unaeff{\gamma}$ that imply $\vec{v}' \geq \vec{v}$.
    By (b), we have both $\binguard{\gamma'} \geq \binguard{\gamma}$ and $\unaguard{\gamma'} \geq \unaguard{\gamma}$ that imply the binary counter and the unary counter remain non-negative.
    By (c), $\len{\gamma'} \leq \len{\gamma}$, so we can conclude $\len{\pi} \geq \len{\pi_1 \gamma' \pi_2}$.
\end{proof}

\section{Proofs of Section~\ref{sec:reshuffling}} \label{app:reshuffling}

\begin{proof}[Proof of Lemma~\ref{lem:one-counter}]
    Since we work with a 2-VASS we write $\oneguard{\pi}$ and $\twoguard{\pi}$ (as opposed to $\binguard{\pi}$ and $\unaguard{\pi}$) to denote the guards on the first and second coordinate.
    Let $\run{\config{p}{u}}{\pi}{\config{q}{v}}$ such that $\pi$ is a bounded run as in Theorem~\ref{thm:blondin}.

    Somewhat similar to how it is defined in \cite{CzerwinskiLLP19}, we say a cycle $\sigma$ is \textit{semi-positive} if $(\eff{\sigma})_2 > 0$. 
    We say that $(\rho, \sigma, \tau)$ is a semi-positive cycle decomposition of $\pi$ if $\pi = \rho \sigma \tau$ and $\sigma$ is a semi-positive cycle. 

    Consider the case when there is no semi-positive cycle decomposition of $\pi$.
    Then from right to left we remove maximal length cycles from $\pi$, resulting in all cycles being removed from $\pi$, to obtain a simple path $\pi'$.
    Since we removed only cycles that are not semi-positive, observe that $\eff{\pi} \leq \eff{\pi'} + \vec{x}$ and $\twoguard{\pi} \leq \twoguard{\pi'}$.
    This holds because $\len{\pi'} < \abs{Q}$ (otherwise we could remove another cycle) and $\oneguard{\pi'} > - \abs{Q} \cdot \abs{V}_{\max}$.
    We can write the narrow linear form $\pi'$ as a single path satisfying the conditions of the lemma.

    We now focus on the case when $\pi$ contains a semi-positive cycle $\sigma$.
    Notice that we can assume that $\sigma$ is a simple cycle, otherwise $\sigma$ contains a shorter cycle $\sigma'$.
    If $\sigma'$ is a semi-positive cycle then either it is simple, or again we can  look for a shorter cycle in $\sigma'$. 
    If at some point we find a cycle that is not semi-positive, we delete it shortening  previous semi-positive cycles. 
    We continue this procedure exhaustively. 
    In the end, since every time we reduce the length of a cycle, we obtain a simple semi-positive cycle.

    Let $\rho\gamma$ be the shortest prefix of $\pi$ such that $\gamma$ is a simple semi-positive cycle, and suppose $\run{\config{p}{u}}{\rho}{\config{p'}{m}}$ (making $\gamma$ a $p'$-cycle).
    Such a $\rho$ and $\gamma$ must exist by the observation in the previous paragraph. 
    Now, let $\rho'$ be the simple path constructed by consecutively removing simple cycles from $\rho$.
    By minimality of $\rho$, only cycles that are not semi-positive are removed. 
    Recall that by Theorem~\ref{thm:blondin}, $\abs{\vec{m}}_{\max} \leq B$.
    Thus $\run{\configvanilla{p}{\vec{u} + (B(1 +
    \abs{V}_{\max}),0)}}{\rho'}{\covfig{p'}{m}}$ for some $\vec{m}' \ge \vec{m}$.

    Let $\tau$ be a simple path from $p'$ to $q$. 
    Such a path exists as $\pi$ is a path from $p$, through $q'$, to $q$. 
    We define the path $\pi' \coloneqq \rho' \gamma^e \tau$, where $e \coloneqq B\cdot  \abs{Q} \cdot \abs{V}_{\max}$.
    This provides a narrow linear form for $\pi'$ as $\rho'$ is a simple path, $\gamma$ is a simple cycle, and $\tau$ is a simple path.
    Since $\gamma$ is simple cycle, it is short, thus both $\oneguard{\gamma} > -\abs{Q}\cdot \abs{V}_{\max}$ and $(\eff{\gamma^e})_1 > -e \abs{Q} \cdot \abs{V}_{\max} $ are bounded below by $-B \abs{Q}^2 \abs{V}_{\max}^2$.
    Thus $\run{\config{p}{u + x}}{\rho'\gamma^e}{\configuration{p'}{m}{''}}$, where $\vec{m}'' \geq \vec{v} + \frac{\vec{x}}{2}$. 
    Finally, since $\tau$ is simple $\oneguard{\tau}, (\eff{\tau})_1 \geq -\abs{Q} \cdot \abs{V}_{\max}$. 
    Thus $\run{\config{p}{u + x}}{\pi'}{\covfig{q}{v}}$ for $\vec{v}' \geq
    \vec{v}$.
\end{proof}
\end{appendix}

\end{document}